\newcommand{\mapsot}{\mapsfrom}
\Crefname{claim}{Claim}{Claims}     %
\newcommand{\ra}[1]{\renewcommand{\arraystretch}{#1}}
\newcommand{\Rsp}{\mathbb{R}}
\newcommand{\Xsp}{\mathbb{X}}
\newcommand{\Zsp}{\mathbb{Z}}
\newcommand{\ot}{\leftarrow}
\renewcommand{\paragraph}[1]{\subparagraph*{#1}}
\newcommand{\im}{\operatorname{im}}
\newcommand{\Hgr}{{\sf H}}
\newcommand{\prism}[1]{\overline{#1}}
\newcommand{\cone}[1]{\hat{#1}}
\newcommand{\bdry}{\partial}
\newcommand{\cbdry}{\delta}
\newcommand{\low}{\operatorname{low}}
\newcommand{\pK}{\prism{K}}
\newcommand{\cK}{\cone{K}}
\newcommand{\pz}{\prism{z}}
\newcommand{\pf}{\prism{f}}
\newcommand{\subK}[1]{K_{-1}^{#1}}
\newcommand{\supK}[1]{K_{#1}^{n+1}}
\newcommand{\ssx}{\sigma}
\newcommand{\tsx}{\tau}
\newcommand{\cssx}{\cone{\ssx}}
\newcommand{\ctsx}{\cone{\tsx}}
\newcommand{\coeff}[2]{\langle #1, #2 \rangle}
\newcommand{\algname}[1]{\textsc{#1}}
\newcommand{\winit}{w_\textrm{init}}
\newcommand{\wfinal}{w_\textrm{final}}
\newcommand{\ee}{\varepsilon}
\newcommand{\cancel}[1]{}
\title{Apex Representatives}
\author{Tamal K. Dey}
{Purdue University}
{tamaldey@purdue.edu}{https://orcid.org/0000-0001-5160-9738}{}
\author{Tao Hou}
{University of Oregon}
{taohou@uoregon.edu}{https://orcid.org/0000-0002-3389-6136}{}
\author{Dmitriy Morozov}
{Lawrence Berkeley National Laboratory}
{dmitriy@mrzv.org}{https://orcid.org/0000-0002-4330-6670}{}
\authorrunning{T.\ Dey, T.\ Hou, D.\ Morozov}
\keywords{zigzag persistent homology, Mayer--Vietoris pyramid, cycle representatives}
\begin{document}

\maketitle

\begin{abstract}
    Given a zigzag filtration, we want to find its barcode representatives,
    i.e., a compatible choice of bases for the homology groups that diagonalize
    the linear maps in the zigzag.  To achieve this, we convert the input zigzag
    to a levelset zigzag of a real-valued function. This function generates a
    Mayer--Vietoris pyramid of spaces, which generates an infinite strip of
    homology groups. We call the origins of indecomposable (diamond) summands of
    this strip their apexes and give an algorithm to find representative cycles
    in these apexes from ordinary persistence computation. The resulting
    representatives map back to the levelset zigzag and thus yield barcode
    representatives for the input zigzag. Our algorithm for lifting a
    $p$-dimensional cycle from ordinary persistence to an apex representative
    takes $O(p \cdot m \log m)$ time.
    From this we can recover zigzag representatives in time $O(\log m + C)$,
    where $C$ is the size of the output.
\end{abstract}

\vspace{-1ex}
\begin{figure}[htbp]
    \centering
    \includegraphics{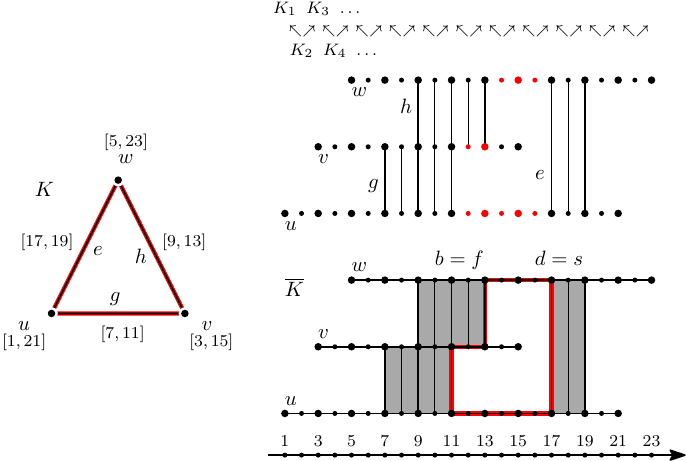}
    \caption{A zigzag on the top. A total complex $K$ on the left. The support of each simplex in a zigzag,
             $T(\ssx)$, shown as an interval. The corresponding prism $\pK$ on
             the bottom.  A 1-cycle in $K[11,17]$ and its lift in $\pK[11,17]$ are
             highlighted in red. The lifted cycle is an apex representative; its
             slices (pairs of vertices in red) are representatives of a bar in the original zigzag.}
    \label{fig:lifted-cycle}
\end{figure}

\section{Introduction}

In topological data analysis, one often wants to not only describe the
distribution of topological features in data, but also to identify individual
features directly in the input. For persistent homology, a natural starting point
are the cycle representatives of homology classes. Specifically, one wants to
find \emph{barcode representatives}, a consistent choice of bases for the
input sequence of homology groups.  For ordinary persistence, this comes
directly from the computation: a representative at the start of the bar includes
into every consecutive space and therefore remains a valid choice of the basis
element.

For zigzag persistence, the situation is more complicated. Because simplices can
enter and leave the complex, a representative cycle at one step in the zigzag
filtration may not exist at a later step. So even though zigzag persistence
algorithms keep track of representative cycles, the specific choices that they
make do not always produce compatible barcode representatives. To cope with this
problem, Dey et al.~\cite{DHM24} introduced an algorithm that keeps track of
additional information that allows it to report
consistent barcode representatives. For a zigzag filtration with $m$ insertions
and deletions, the algorithm
requires $O(m^3)$ preprocessing and $O(m^2)$ time to report one or all
representatives for any single bar, and $O(m^3)$ time to report all representatives.

On the surface, this is as well as one can hope to do: there are $O(m)$ bars,
over $O(m)$ spaces, with each representative consisting of $O(m)$ simplices.
But we can do better.
Every zigzag
filtration is isomorphic to a levelset zigzag filtration of some function.
The function generates the Mayer--Vietoris pyramid~\cite{CdSM09}, which
decomposes into diamond summands~\cite{BEMP13}. Each diamond has a single space
as its origin, its apex: this space maps into every other space in the diamond.
By finding the diamond's representative in
the apex --- an apex representative --- we can quickly recover its
representative in every other space, including those that fall in the levelset
zigzag, and therefore in the original zigzag; see \cref{fig:lifted-cycle}. All the information can be
recovered from an ordinary persistence computation. It takes
matrix multiplication time, $O(m^\omega)$, to compute persistence~\cite{MS24}, and an extra
$O(p \cdot m \log m)$ time to recover a $p$-dimensional
apex representative of size $O(p \cdot m)$. After the preprocessing, zigzag representatives
can be recovered in $O(\log m + C)$ time, where $C$ is the size of the output.

\section{Background}

We assume familiarity with simplicial and cell complexes, homology,
persistent homology, including zigzag persistence.
We only recap select topics to establish notation. Any
gaps can be filled with the standard literature~\cite{Hat02,EdHa10,EdMo17}.

Throughout the paper, we take care to work with arbitrary field coefficients,
but abuse notation as follows. Cell $\ssx \in z$ means the coefficient of $\ssx$ in
$z$ is non-zero, $\coeff{\ssx}{z} \neq 0$.
If $\alpha$ is a chain in some cell complex $K$, then $\alpha \subseteq L$ means that
$\alpha$ is supported on the subcomplex $L \subseteq K$, i.e., 
$\forall \ssx \in (K - L), \coeff{\ssx}{\alpha} = 0$.
Similarly, $z \cap L$ refers to the restriction of chain $z$ to $L$, i.e.,
$z \cap L = \sum_{\ssx \in L} \coeff{\ssx}{z} \cdot \ssx$.

\paragraph{Persistence.}
Given a filtered cell complex $K$, let $D$ be a matrix that represents its
boundary operator, with rows and columns ordered according to the filtration.
To find ordinary persistence pairing, one can compute a decomposition, $R =
DV$, where matrix $R$ is reduced, meaning its pivots --- lowest non-zero
elements in its columns --- appear in unique rows, and
matrix $V$ is full-rank upper-triangular. We index the columns and rows of the
matrices by the (totally ordered) cells of the input complex.
The persistence pairing is given by
the pivots of matrix $R$: a $p$-dimensional homology class created by the
addition of simplex $\ssx$ is destroyed by the addition of simplex $\tsx$ iff
$\low R[\tsx] = \ssx$, where $\low R[\tsx]$ returns the pivot in the column
$R[\tsx]$. By definition, the corresponding column $V[\tsx]$ stores a
$(p+1)$-dimensional chain whose boundary is the cycle $R[\tsx]$.

The decomposition $R=DV$ is not unique: multiple matrices $R$ and $V$ satisfy
it. However, the original persistence algorithm~\cite{ELZ02} performs operations
in a lazy fashion, subtracting columns from left to right only if their pivots
collide. We call this a \emph{lazy reduction}. It has the following special
properties that simplify our algorithms.

\begin{lemma}[Lemma 1 in \cite{NM24}]
    \label{lem:lazy-reduction}
    Assume decomposition $R = DV$ is obtained via the lazy reduction,
    and $\ssx_i = \low R[\tsx_i]$ and $\ssx_j = \low R[\tsx_j]$
    are such that $\tsx_i < \tsx_j$. If $\ssx_i < \ssx_j$,
    then entry $V[\tsx_i,\tsx_j] = 0$.
\end{lemma}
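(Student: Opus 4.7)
The plan is to argue by strong induction on the column index $\tsx_j$ in filtration order, assuming the statement for all earlier columns.

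First, I would record the lazy-reduction history of column $\tsx_j$. Let $\tsx_{k_1}, \ldots, \tsx_{k_r}$ be the columns (each with $\tsx_{k_l} < \tsx_j$) that get subtracted from $\tsx_j$ during its reduction, with coefficients $c_1, \ldots, c_r$. Because each subtraction $l$ cancels the row $\ssx_{k_l} = \low R[\tsx_{k_l}]$ that is the current pivot of $R[\tsx_j]$ at that moment, the pivot of $R[\tsx_j]$ strictly decreases throughout its reduction and terminates at $\ssx_j$; in particular $\ssx_{k_l} > \ssx_j$ for every $l$. Since the algorithm processes columns from left to right, each $V[\tsx_{k_l}]$ is already in its final state when used, and so the finalized $\tsx_j$-th column of $V$ satisfies
\[
V[\tsx_j] \;=\; \mathbf{e}_{\tsx_j} \;-\; \sum_{l=1}^{r} c_l\, V[\tsx_{k_l}],
\]
which gives $V[\tsx_i, \tsx_j] = -\sum_l c_l\, V[\tsx_i, \tsx_{k_l}]$ whenever $\tsx_i \neq \tsx_j$.

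Next, I would show that every term on the right vanishes under the hypothesis $\ssx_i < \ssx_j$. For each subtracted column $\tsx_{k_l}$ there are three cases. If $\tsx_i > \tsx_{k_l}$, then $V[\tsx_i, \tsx_{k_l}] = 0$ by upper-triangularity of $V$. If $\tsx_i < \tsx_{k_l}$, the chain $\ssx_i < \ssx_j < \ssx_{k_l}$ lets me apply the inductive hypothesis to the pair $(\tsx_i, \tsx_{k_l})$ to conclude $V[\tsx_i, \tsx_{k_l}] = 0$. The case $\tsx_i = \tsx_{k_l}$ is impossible since it would force $\ssx_i = \ssx_{k_l}$, contradicting $\ssx_i < \ssx_j < \ssx_{k_l}$. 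Combining the three cases yields $V[\tsx_i, \tsx_j] = 0$.

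The main delicate point is verifying that the $V[\tsx_{k_l}]$ appearing in the expansion of $V[\tsx_j]$ are indeed the \emph{finalized} columns, so that both the inductive hypothesis and upper-triangularity apply to them cleanly; this is exactly what the left-to-right, one-column-at-a-time structure of the lazy reduction guarantees. Beyond that bookkeeping, the whole argument reduces to the monotone decrease of intermediate pivots during the reduction of a single column, which is the defining behavior of lazy reduction and what distinguishes it from more aggressive variants where the lemma can fail.
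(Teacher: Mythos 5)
Your proof is correct, and it follows the same route the paper indicates: the paper merely writes ``Induction on $V$; see~\cite{NM24}'', and your argument is exactly that induction spelled out. The two observations it rests on --- that in a lazy reduction the intermediate pivots of $R[\tsx_j]$ are strictly decreasing (so every subtracted column $\tsx_{k_l}$ has $\ssx_{k_l} > \ssx_j$), and that $V[\tsx_j] = \mathbf{e}_{\tsx_j} - \sum_l c_l V[\tsx_{k_l}]$ with each $V[\tsx_{k_l}]$ already final --- are precisely what make the three-case analysis go through, and you correctly handle the edge cases (equality is excluded by the pivot chain, and the $\tsx_i > \tsx_{k_l}$ case falls to upper-triangularity). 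This matches the cited proof in spirit and content.
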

\begin{proof}
    Induction on $V$; see \cite{NM24}.
\end{proof}

\begin{remark}
    The authors of \cite{NM24} simplify the statement of the preceding lemma by
    abusing the notation. They assume that
    if $R[\tsx] = 0$, then $\low R[\tsx] = \bar{\ssx}$, a special imaginary cell
    that precedes every cell in the complex. In particular, it follows that
    in a lazy reduction, entry $V[\tsx_i, \tsx_j] = 0$ if $R[\tsx_i] = 0$.
\end{remark}

The contrapositive of the lemma is the following corollary.
\begin{corollary}
    \label{cor:lazy-nested}
    Assume decomposition $R = DV$ is obtained via the lazy reduction,
    and $\ssx_i = \low R[\tsx_i]$ and $\ssx_j = \low R[\tsx_j]$ are such that
    $\tsx_i < \tsx_j$. If
    $V[\tsx_i, \tsx_j] \neq 0$, then $\ssx_j < \ssx_i < \tsx_i < \tsx_j$.
\end{corollary}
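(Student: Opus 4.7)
The plan is to verify the three strict inequalities $\ssx_j < \ssx_i$, $\ssx_i < \tsx_i$, and $\tsx_i < \tsx_j$ separately, noting that the last is given as a hypothesis.

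First I would extract $\ssx_i < \tsx_i$ from the structure of $R = DV$. Since $V$ is upper-triangular, the column $R[\tsx_i]$ is a linear combination of boundary columns $D[\tsx']$ with $\tsx' \le \tsx_i$, and each such $D[\tsx']$ is supported on the faces of $\tsx'$, which all precede $\tsx'$ in the filtration. Hence every non-zero row of $R[\tsx_i]$ is strictly below $\tsx_i$, and in particular the pivot $\ssx_i = \low R[\tsx_i]$ satisfies $\ssx_i < \tsx_i$. The same argument, applied to $\tsx_j$, also gives $\ssx_j < \tsx_j$, though we do not need this directly.

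Next I would obtain $\ssx_j < \ssx_i$ by the contrapositive of \cref{lem:lazy-reduction}. Since $V[\tsx_i,\tsx_j] \neq 0$ and $\tsx_i < \tsx_j$, the lemma forbids $\ssx_i < \ssx_j$, so $\ssx_j \le \ssx_i$. To promote this to a strict inequality, I would invoke the fact that $R$ is reduced: distinct columns with non-zero pivots have pivots in distinct rows, so $\low R[\tsx_i]$ and $\low R[\tsx_j]$ cannot coincide, giving $\ssx_j < \ssx_i$.

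Combining the three inequalities yields the nested chain $\ssx_j < \ssx_i < \tsx_i < \tsx_j$. The only subtle point, and the place where the statement is slightly more than a literal contrapositive, is the strictness $\ssx_j < \ssx_i$; this requires one extra appeal to the definition of a reduced matrix to rule out equal pivots. Everything else is immediate from the hypotheses and from the upper-triangular structure of $R = DV$.
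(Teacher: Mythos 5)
Your proof is correct and matches the paper's approach, which presents the corollary simply as ``the contrapositive'' of \cref{lem:lazy-reduction}; you are right that the statement actually needs two further observations — strictness of $\ssx_j < \ssx_i$ from $R$ being reduced, and $\ssx_i < \tsx_i$ from $R$ being a combination of boundary columns — and you supply both cleanly, so your write-up is a fuller account of what the paper leaves implicit.
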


\paragraph{Zigzag persistence.}
We start with a zigzag of simplicial complexes,
\begin{equation}
    \label{eq:space-zigzag}
    K_0 \to K_1 \ot K_2 \to K_3 \ot K_4 \to \ldots \ot K_n,
\end{equation}
where every consecutive pair of complexes vary by at most one simplex, i.e.,
$K_i$ can be equal to $K_{i+1}$ in the sequence.
It is convenient to assume that every simplex $\ssx$ appears and disappears exactly
once in this sequence, and therefore is present during some interval $T(\ssx) =
[i,j]$, i.e., $\ssx \in K_k$ iff $k \in T(\ssx)$. This assumption can be
made without loss of generality by assuming that the union $K$ of all simplices
across all times, $K = \bigcup_i K_i$ forms a $\Delta$-complex.
The full
definition of this object is too verbose --- we refer the reader to \cite{DH22},
which adapts the classical concept of $\Delta$-complexes in
\cite[p.~103]{Hat02} to our case of zigzag filtration --- but informally it relaxes the requirement that in a
simplicial complex two simplices intersect in a common face. All the familiar
notions of simplicial homology work the same, but now instead of having a
simplex supported over multiple intervals in the zigzag, we can have multiple
simplices, with the same boundary, each supported on a single interval.
We use $m = |K|$ to denote the size of the input complex.

\begin{remark*}
To simplify the notation we let $\min(\ssx) = \min T(\ssx)$ and $\max(\ssx) = \max T(\ssx)$.
\end{remark*}

\begin{remark}
    \label{rmk:generic-time-intervals}
    We assume that $\max(\ssx) > \min(\ssx)$, i.e., $T(\ssx)$ is not a
    single point. We can assume this without loss of generality because we can
    always pad the zigzag with extra copies of a space.
\end{remark}

Applying homology, we get a zigzag of homology groups, where we suppress
dimension for simplicity,
\begin{equation}
    \label{eq:zigzag}
    \Hgr(K_0) \to \Hgr(K_1) \ot \Hgr(K_2) \to \Hgr(K_3) \ot \Hgr(K_4) \to \ldots \ot \Hgr(K_n).
\end{equation}
Just like with ordinary persistence, this zigzag decomposes into $k$ bars
or persistence intervals~\cite{CdS10}.
Let $\{ \alpha^1_i, \ldots, \alpha^k_i \}$ be a choice of elements in $\Hgr(K_i)$
such that the non-zero elements form a basis for $\Hgr(K_i)$. We say that
such bases are \emph{compatible} across the zigzag if the maps in \cref{eq:zigzag} diagonalize, i.e.,
$\alpha^j_{2i-1} \mapsot \alpha^j_{2i} \mapsto \alpha^j_{2i+1}$. For every $j$,
$\alpha^j_*$ are non-zero over a single persistence interval $[b^j,d^j]$.
Cycles $z_i^j$ that give a set of compatible bases
$\alpha_i^j = [z_i^j]$ are called \emph{zigzag representatives}, with $z_*^j$
being the representatives of the $j$-th bar in the zigzag barcode.

\paragraph{Real-valued function.}
A convenient setting for persistence is that of a Morse-like~\cite{CdSKM19} real-valued
function $f: \Xsp \to \Rsp$.
We denote with $\Xsp_a^b = f^{-1}[a,b]$ the preimage of an interval and allow
the endpoints to be infinite, $a,b = \pm\infty$, in which case the
interval is understood to be open at the infinite ends. We denote the following pairs of spaces:
\begin{align*}
    & \Xsp[b,d] = (\Xsp_b^d, \emptyset),
   && \Xsp(b,d] = (\Xsp_{-\infty}^d, \Xsp_{-\infty}^b), \\
    & \Xsp[b,d) = (\Xsp_b^{\infty}, \Xsp_d^{\infty}),
   && \Xsp(b,d) = (\Xsp, \Xsp_{-\infty}^b \cup \Xsp_d^{\infty}).
\end{align*}

Let $a_1 < \ldots < a_n$ be the critical values of the function $f$.
Let $s_i$ be regular values interleaved with the critical values: $s_0 < a_1 <
s_1 < a_2 < \ldots < s_{n-1} < a_n < s_n$.
The following constructions play an important role in the
theory of persistent homology:
\begin{itemize}[leftmargin=*]
    \item
        Extended persistence (EP):
        \begin{align*}
            0& \to \Hgr(\Xsp[-\infty, s_0]) \to \ldots \to \Hgr(\Xsp[-\infty,
            s_n]) = \Hgr(\Xsp) \\
             & \to \Hgr(\Xsp[-\infty, s_n)) \to \ldots \to \Hgr(\Xsp[-\infty, s_1))
               \to \Hgr(\Xsp[-\infty, s_0)) = 0.
        \end{align*}
    \item
        Levelset zigzag (LZZ):
        \[
            0 \ot \Hgr(\Xsp[s_0, s_0]) \to \Hgr(\Xsp[s_0, s_1]) \ot
            \Hgr(\Xsp[s_1,s_1]) \to \ldots \to \Hgr(\Xsp[s_{n-1}, s_n]) \ot
            \Hgr(\Xsp[s_n,s_n]) \to 0.
        \]
\end{itemize}

Carlsson et al.~\cite{CdSM09} showed that these two sequences contain the same
information by arranging the four types of spaces into a \emph{Mayer--Vietoris
Pyramid}, see \cref{fig:mayer-vietoris-pyramid}(left). Once homology is applied, the
pyramid unrolls into an infinite Mayer--Vietoris strip of homology groups,
where every diamond belongs to the Mayer--Vietoris long exact sequence,
making it \emph{exact} in the terminology of \cite{CdS10}, see
\cref{fig:mayer-vietoris-pyramid}(right).
This allows one to translate the decomposition between any two paths that differ
by a single diamond (and therefore between any pair of paths via composition).

\begin{figure}
    \centering
    \includegraphics{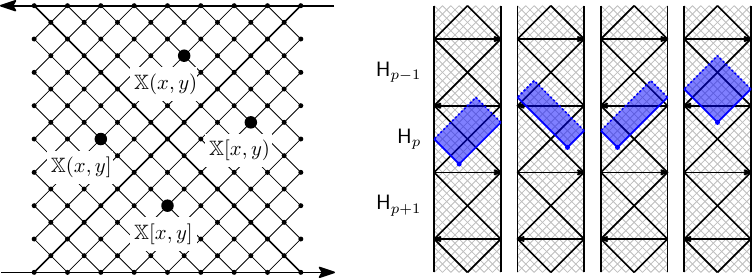}
    \caption{Left: Mayer--Vietoris Pyramid arranges spaces $\Xsp[b,d], \Xsp(b,d],
             \Xsp[b,d), \Xsp(b,d)$ in a way where every diamond in the pyramid belongs to
             the Mayer--Vietoris long exact sequence. Right: Four types of flush
             diamond indecomposables in the infinite strip of homology groups;
             the apex of each diamond is marked.}
    \label{fig:mayer-vietoris-pyramid}
\end{figure}

Bendich et al.~\cite{BEMP13} further showed that the translation rules in
\cite{CdSM09} hold at the level of the basis elements of the individual paths,
which means the entire pyramid decomposes as a direct sum of
(indecomposable) pointwise $1$-dimensional diamond summands, flush with its boundaries, shown in blue in \cref{fig:mayer-vietoris-pyramid}(right). The decomposition
of the levelset zigzag and extended persistence are simply slices through these
flush diamond summands.
Bauer et al.~\cite{BBF21} study the Mayer--Vietoris Pyramid in the cohomological setting.

We call the bottom-most space in the support of a diamond its \emph{apex}. The
homology class assigned to the diamond in that space, an \emph{apex class}, and
any cycle that belongs to such a class, an \emph{apex representative}.
We note that just like the choice of the basis for ordinary persistence is not
unique, neither is the choice of the apex classes.
An apex class is characterized by lying outside the image of the maps into the
apex; specifically, for the four types of apexes their classes satisfy:

\vspace{1ex}
\hspace{-5.5ex}
\begin{tabular}{l@{~}c@{~}l@{~}c@{~}l}
        $[z] \in \Hgr(\Xsp[b,d])$
         & s.t. &
        $[z] \notin \im (\Hgr(\Xsp[b+\ee,d]) \to \Hgr(\Xsp[b,d]))$
         & and &
        $[z] \notin \im (\Hgr(\Xsp[b,d-\ee]) \to \Hgr(\Xsp[b,d]))$ \\
        $[z] \in \Hgr(\Xsp(b,d])$
         & s.t. &
        $[z] \notin \im (\Hgr(\Xsp(b-\ee,d]) \to \Hgr(\Xsp(b,d]))$
         & and &
        $[z] \notin \im (\Hgr(\Xsp(b,d-\ee]) \to \Hgr(\Xsp(b,d]))$ \\
        $[z] \in \Hgr(\Xsp[b,d))$
         & s.t. &
        $[z] \notin \im (\Hgr(\Xsp[b+\ee,d)) \to \Hgr(\Xsp[b,d)))$
         & and &
        $[z] \notin \im (\Hgr(\Xsp[b,d+\ee)) \to \Hgr(\Xsp[b,d)))$ \\
        $[z] \in \Hgr(\Xsp(b,d))$
         & s.t. &
        $[z] \notin \im (\Hgr(\Xsp(b+\ee,d)) \to \Hgr(\Xsp(b,d)))$
         & and &
        $[z] \notin \im (\Hgr(\Xsp(b,d-\ee)) \to \Hgr(\Xsp(b,d)))$
\end{tabular}
\vspace{1ex}

Because
there is a map from the apex to every space in the diamond, we
can recover a representative in any space by simply mapping an apex
representative forward. All the maps in the pyramid are inclusions, except for
the boundary homomorphism connecting two consecutive dimensions of homology in
the strip. We say more about this translation in \cref{sec:zigzag-representatives}, but the
overarching point is that given an apex representative, it is straightforward to
recover every other representative, and levelset zigzag representatives in
particular.

\section{Setup}
\label{sec:setup}

\paragraph{Prism.}
Given a zigzag in \cref{eq:space-zigzag},
we define a larger cell complex $\pK$, called a \emph{prism}:
\begin{equation}
    \label{eq:prism}
    \pK = \left\{ \ssx \times [i,i+1] \mid [i,i+1] \subseteq T(\ssx), i \in \Zsp \right\}
                \; \cup \;
          \left\{ \ssx \times i \mid i \in T(\ssx), i \in \Zsp \right\},
\end{equation}
together with a function $\pf: |\pK| \to \Rsp$ on its underlying space,
defined by the projection onto the second component, $\pf(x \times t) = t$.
See \cref{fig:prism}.
We observe that the prism consists of two types of cells:
\emph{horizontal cells} $\ssx \times [i, i+1]$ and \emph{vertical cells} $\ssx \times i$.

\begin{figure}
    \centering
    \includegraphics{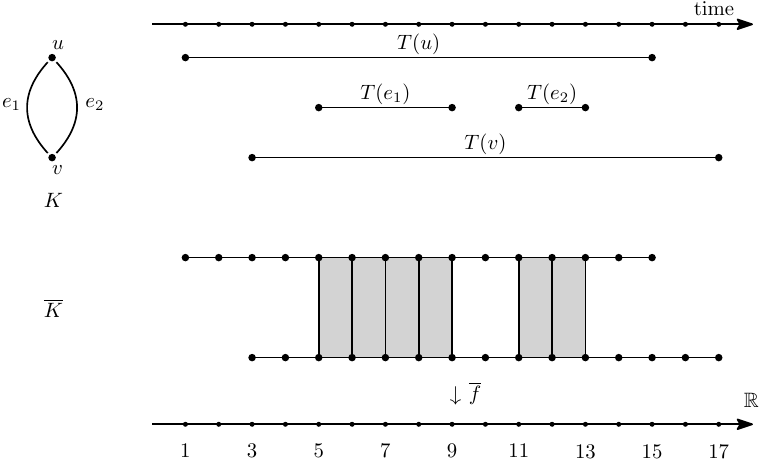}
    \caption{A $\Delta$-complex $K$, with the times $T(\ssx)$ of its simplices
             illustrated, together with the corresponding prism $\pK$ and its
             projection $\pf$ onto the second coordinate.}
    \label{fig:prism}
\end{figure}

We define an integer interlevel set, for $i,j \in \Zsp$,
$
    \pK_i^j = \{ \ssx \times T \in \pK \mid T \subseteq [i, j] \}.
$
We note that its underlying space $|\pK_i^j| = \pf^{-1}[i,j]$.
Then for any pair of integers $b,d$, we get four types of pairs of subspaces,

\hspace{-4ex}
\begin{tabular}{llll}
    $\pK[b,d] = \pK_b^d$,                    &
    $\pK(b,d] = (\pK_{-1}^d, \pK_{-1}^b)$,   &
    $\pK[b,d) = (\pK_b^{n+1}, \pK_d^{n+1})$, &
    $\pK(b,d) = (\pK, \pK_{-1}^b \cup \pK_d^{n+1})$.
\end{tabular}
Applying homology to each pair, we get relative homology groups that
appear in the four quadrants of the Mayer--Vietoris pyramid:
$\Hgr(\pK[b,d]), \Hgr(\pK(b,d])), \Hgr(\pK[b,d)), \Hgr(\pK(b,d))$.

Because even-indexed spaces in \cref{eq:space-zigzag} include into odd-indexed
spaces,
$K_{2i-1} \ot K_{2i} \to K_{2i+1}$, we have an isomorophism of
homology groups $\Hgr(\pK[2i,2i+1]) \simeq \Hgr(\pK[2i+1,2i+1]) \simeq
\Hgr(\pK[2i+1,2i+2])$, induced by deformation retractions.
It follows that the levelset zigzag~\cite{CdSM09} of function $\pf$,
\[
    \underbrace{\Hgr(\pK[0,0])}_{\Hgr(K_0)} \to
    \underbrace{\Hgr(\pK[0,1]) \stackrel{\simeq}{\ot} \Hgr(\pK[1,1]) \stackrel{\simeq}{\to} \Hgr(\pK[1,2])}_{\Hgr(K_1)} \ot
    \underbrace{\Hgr(\pK[2,2])}_{\Hgr(K_2)} \to
    \ldots \ot
    \underbrace{\Hgr(\pK[n,n])}_{\Hgr(K_n)}
\]
is isomorphic to our starting zigzag in \cref{eq:zigzag}.
Explicitly, the isomorphism is given by the following bijection between the four
types of intervals:
\begin{align*}
    & (\Hgr(K_{2i}),   \Hgr(K_{2j}))   & \leftrightarrow &&& (\Hgr(\pK[2i,2i]),   \Hgr(\pK[2j,2j]))         && \text{(open-open)}\\
    & (\Hgr(K_{2i}),   \Hgr(K_{2j+1})) & \leftrightarrow &&& (\Hgr(\pK[2i,2i]),   \Hgr(\pK[2j+1,2j+2]))     && \text{(open-closed)}\\
    & (\Hgr(K_{2i+1}), \Hgr(K_{2j}))   & \leftrightarrow &&& (\Hgr(\pK[2i,2i+1]), \Hgr(\pK[2j,2j]))         && \text{(closed-open)}\\
    & (\Hgr(K_{2i+1}), \Hgr(K_{2j+1})) & \leftrightarrow &&& (\Hgr(\pK[2i,2i+1]), \Hgr(\pK[2j+1,2j+2]))     && \text{(closed-closed)}
\end{align*}

\paragraph{Intervals.}

Because we perform computation on the input complex $K$, rather than
the prism, we need a notation for different intervals and pairs of intervals.
The connection between these and the corresponding intervals in $\pK$ will be
made clear in \cref{sec:intervals}.
We denote the sub- and super-level sets:
\begin{align*}
  & \subK{i} = \{ \ssx \in K \mid \min(\ssx) \leq i \}, &&
    \supK{j} = \{ \ssx \in K \mid \max(\ssx) \geq j \},
\end{align*}
and define four pairs of spaces:
\begin{align*}
    & K[b,d] = \supK{b} \cap \subK{d}, &
    & K(b,d] = (\subK{d}, \subK{b}), \\
    & K[b,d) = (\supK{b}, \supK{d}), &
    & K(b,d) = (K, \subK{b} \cup \supK{d}).
\end{align*}
Because every step of the input zigzag is a simplicial complex, it follows that
for any $\ssx,\tsx \in K$,
\begin{align}
    \label{eq:face-nests}
    \ssx \in \bdry \tsx \Rightarrow \min(\ssx) < \min(\tsx) < \max(\tsx) < \max(\ssx).
\end{align}
The following relationships between sub- and super-levelsets follow immediately:
\begin{align}
    \alpha \subseteq \subK{x} & \Rightarrow \bdry \alpha \subseteq \subK{x} \label{eq:bdry-sub} \\
    \alpha \subseteq K - \supK{x} & \Rightarrow \alpha \subseteq \subK{x} \label{eq:sup-sub}
\end{align}

\paragraph{Cone.}
To compute the persistence decomposition of the zigzag, we follow the algorithm
of Dey and Hou~\cite{DH22}, who use a construction similar to extended
persistence~\cite{CEH09} on the cone $\cK = \omega * K$ over the input complex $K$,
where $*$ indicates a join with the cone vertex $\omega$.
Every simplex $\ssx$ in the base space $K$ gets value $\min(\ssx)$ from the
input zigzag. Every simplex $\cssx$ in the cone $\cK - K$ gets value
$\max(\ssx)$ from the input zigzag. We then define a filtration, where the base
space simplices come first, ordered by increasing value, and the cone simplices
come second, ordered by decreasing value.  Specifically,
\[
    0 \to \Hgr(\subK{1}) \to \ldots \to \Hgr(\subK{n}) = \Hgr(K) \to
    \Hgr(K, \supK{n}) \to \ldots \to \Hgr(K, \supK{1})
    \to \Hgr(K,K) = 0,
\]
which in reduced homology is isomorphic to
\begin{equation}
    \label{eq:cone-filtration}
    0 \to \Hgr(\subK{1}) \to \ldots \to \Hgr(K) \to
    \Hgr(K \cup \omega * \supK{n}) \to \ldots \to
    \Hgr(K \cup \omega * \supK{1}) \to \Hgr(\cK) = 0.
\end{equation}
The latter is an ordinary filtration and we compute its persistence via
the $R = DV$ decomposition of the boundary map of the cone $\cK$.
Crucially, the resulting bars are in one-to-one correspondence with the bars in
the decomposition of the input zigzag in \cref{eq:zigzag}.

\section{Lifting}
\label{sec:lifting}

Because prism $\pK$ has so many cells, it is expensive to process directly. We want to
perform computation on the cone $\cK$ instead. To recover the apex representatives for
the prism, we need to lift cone cycles in $\cK$ to prism cycles in $\pK$. In \cref{sec:intervals}, we
explain what exact cycles to lift to get different apex representatives.
Meanwhile, we first describe an algorithm that given a (relative) cycle $z$ in
$KI$ produces a cycle $\pz$ in $\pK I$, where $I$ is one of the four types of
intervals, $(b,d]$, $[b,d)$, $(b,d)$, $[b,d]$.
\cref{alg:lift-cycle-reinterpreted} in this section is a reinterpretation of an
inefficient, but easier-to-understand \cref{alg:lift-cycle}, relegated to
\cref{apx:lifting} for lack of space. We encourage the reader to go through
\cref{alg:lift-cycle} first.

\begin{remark*}
    A $p$-chain $\pz$ in $\pK$ consists of two types of $p$-cells: $\tsx \times t$,
    where $\tsx$ is a $p$-simplex, and $\ssx \times [t_1, t_2]$, where $\ssx$ is a
    $(p-1)$-simplex.
\end{remark*}
\begin{remark*}
    We use an abridged notation,
    $c \cdot \ssx \times [i,j]$, to represent the chain
    $\sum_{k \in [i, j-1]} c \cdot \ssx \times [k,k+1]$.
    This choice is guided by computational efficiency:
    the former requires constant space vs.\ the $O(j-i)$ space required for the latter.
\end{remark*}

\cref{alg:lift-cycle-reinterpreted} takes a $p$-cycle $z$, a direction expressed as a pair of
values $(s,f)$, and an initial $(p-1)$-cycle $\winit$.
We call the pair $(s,f)$ a direction because the order of the values specifies
whether we process the cycle in increasing ($s < f$) or decreasing ($f < s$)
order. The algorithm stretches the cycle from $\pK[s,s]$ to $\pK[f,f]$,
covering $\pK I$; see \cref{fig:lifted-cycle}.

\begin{algorithm}
    \caption{\algname{Lift-Cycle}($z$, $(s,f)$, $\winit$) (Reinterpretation of
    \cref{alg:lift-cycle})} %
    \begin{algorithmic}[1]
        \Statex \textbf{Input:}
        \begin{itemize}[label=--, leftmargin=*]
            \item relative $p$-cycle $z$ in $H(KI)$, where $I$ is one of $(b,d]$, $[b,d)$, $(b,d)$, $[b,d]$;
            \item direction from start to finish, $s, f$ ($=$ $b,d$ or $d,b$);
            \item initial boundary $\winit = - \bdry z \cap K[s,s] \subseteq K[s,s]$
        \end{itemize}

        \Statex \textbf{Output:}
        \begin{itemize}[label=--, leftmargin=*]
            \item relative $p$-cycle $\pz$ in $H(\pK I)$
        \end{itemize}

        \For{\textit{each} $\tsx \in z$}
            \State $t_\tsx \gets \textrm{a value in}~ T(\tsx) \cap [b,d]$
            \Comment{if $T(\tsx) \cap [b,d] = \emptyset$, ignore $\tsx$} \;
            \State{$\pz \gets \pz + \coeff{\tsx}{z} \cdot (\tsx \times t_\tsx) $} \;
        \EndFor
        \For{\textit{each} $\ssx$ s.t.\ $\exists \tsx \in z, \ssx \in \bdry \tsx$} \Comment{each face of some simplex in $z$}
            \State $c \gets \coeff{\ssx}{\winit}$ \label{line:init-coeff} \;
            \State $l \gets s$ \;
            \For{$(t_\tsx,\tsx) \in \left\{ (t_\tsx, \tsx) \mid \tsx \in \cbdry \ssx \right\}$ in order of $t_\tsx$ from $s$ to $f$}
                \If{$l \neq t_\tsx$}
                    \State $\pz \gets \pz + c \cdot (\ssx \times [l, t_\tsx])$ \label{line:z-update-1-r} \;
                \EndIf
                \State $c \gets c + \coeff{\tsx}{z} \cdot \coeff{\ssx}{\bdry \tsx}$ \;
                \State $l \gets t_\tsx$ \;
            \EndFor
            \If{$l \neq f$}
                \State $\pz \gets \pz + c \cdot (\ssx \times [l,f]) \label{line:z-update-2-r}$
            \EndIf
        \EndFor
    \end{algorithmic}
    \label{alg:lift-cycle-reinterpreted}
\end{algorithm}

\paragraph{Correctness.}
The correctness of \cref{alg:lift-cycle-reinterpreted} follows from the
following claim about the boundary structure, which will be important in
\cref{sec:intervals}.
We let $z_x^y$ be the restriction of cycle $z$ to the simplices $\tsx$ whose times
$t_\tsx$ lie in the interval $[x,y]$, i.e.,
$z_x^y = \sum_{\tsx \in z, t_\tsx \in [x,y]} \coeff{\tsx}{z} \cdot \tsx$.

\begin{claim}
    \label{clm:lifted-boundary}
    Given input cycle $z \in K$, the boundary of the lifted cycle $\pz$ satisfies
    $
        \bdry \pz = \winit \times s + \wfinal \times f,
    $
    where
    $
        \wfinal = \winit + \bdry z_b^d.
    $
\end{claim}
\begin{proof}
    Simplex $\ssx$ generates a cell $\ssx \times [s,l] \in \pz$ iff
    $\coeff{\ssx}{\winit} \neq 0$ (Line \ref{line:init-coeff}). This cell contributes
    $\ssx \times s$ to the boundary of $\pz$.
    Simplex $\ssx$ generates a cell $\ssx \times [l',f] \in \pz$ iff
    $\coeff{\ssx}{\winit} + \sum_{\tsx \in \cbdry \ssx \cap z_b^d} \coeff{\ssx}{\tsx} \neq 0$.
    By definition this is the case only for simplices in $\wfinal = \winit +
    \bdry z_b^d$.
\end{proof}
It follows that if $\winit \subseteq K[s,s]$, then $\pz$ is a cycle in $\pK I$.
What is not immediate is why the chains $\ssx \times [l,t_\tsx]$ and
$\ssx \times [l,f]$ in
Lines \ref{line:z-update-1-r} and \ref{line:z-update-2-r} are in the prism $\pK$.

\begin{claim}
    Chains $\ssx \times [l, t_\tsx]$ and $\ssx \times [l, f]$ added in Lines
    \ref{line:z-update-1-r} and \ref{line:z-update-2-r} of
    \cref{alg:lift-cycle-reinterpreted} are present in $\pK$.
\end{claim}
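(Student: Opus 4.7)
The plan is to check that every horizontal cell $\ssx \times [k, k+1]$ appearing in the added chains lies in $\pK$, which by \eqref{eq:prism} amounts to $[k, k+1] \subseteq T(\ssx)$. In the abridged notation, $c \cdot (\ssx \times [l, u])$ expands into a sum of unit cells over the integer interval between $l$ and $u$; when $c = 0$ the chain is identically zero and trivially lies in $\pK$, so I restrict to the case $c \neq 0$ and need to show $[l, u] \subseteq T(\ssx)$.

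The key ingredient is the face-nesting property \eqref{eq:face-nests}: for $\ssx \in \bdry \tsx$ we have $T(\tsx) \subseteq T(\ssx)$, so every time $t_\tsx$ chosen in $T(\tsx) \cap [b, d]$ automatically lies in $T(\ssx)$. For Line \ref{line:z-update-1-r}, the right endpoint $t_\tsx$ of $[l, t_\tsx]$ is in $T(\ssx)$ by this observation. The left endpoint $l$ is either the time $t_{\tsx'}$ of the previously processed coface (hence in $T(\ssx)$), or, on the first iteration, the starting value $s$; in the latter subcase $c \neq 0$ forces $\ssx \in \winit \subseteq K[s, s]$ and hence $s \in T(\ssx)$. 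Convexity of the integer interval $T(\ssx)$ then yields $[l, t_\tsx] \subseteq T(\ssx)$.

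For Line \ref{line:z-update-2-r} the analysis of $l$ is the same. For the right endpoint $f$, when $c \neq 0$ Claim \ref{clm:lifted-boundary} identifies the accumulated coefficient as $\coeff{\ssx}{\wfinal}$, so $\ssx \in \wfinal = \winit + \bdry z_b^d$. The desired $f \in T(\ssx)$ then follows from the stronger containment $\wfinal \subseteq K[f, f]$, after which convexity of $T(\ssx)$ gives $[l, f] \subseteq T(\ssx)$ and the argument finishes as before.

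The main obstacle is the containment $\wfinal \subseteq K[f, f]$, which I expect to dispatch by a case analysis on the four types of $I$. When $I = [b, d]$ it is immediate: $z$ is an absolute cycle, so $\bdry z_b^d = 0$, and the input condition $\winit = -\bdry z \cap K[s, s] = 0$ gives $\wfinal = 0$. For the three remaining types $(b, d]$, $[b, d)$, and $(b, d)$, the defining relative-cycle condition on $z$ (that its boundary is supported on the appropriate sub- or super-level complex) combined with $\winit \subseteq K[s, s]$ forces every simplex of $\wfinal$ to have its time interval reaching $f$, which completes the proof.
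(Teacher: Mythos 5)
Your overall structure is sound, and your handling of the middle intervals and the $s$-endpoint is essentially the paper's argument. But the route you take for the $f$-endpoint of the last chain, via the global containment $\wfinal \subseteq K[f,f]$, is where you part ways with the paper and where the gap sits.

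Your justification of $\wfinal \subseteq K[f,f]$ does not close. First, even in the ``immediate'' case $I=[b,d]$ you pass from ``$z$ is an absolute cycle'' to ``$\bdry z_b^d = 0$''; these are a priori different, since $z_b^d$ is the restriction of $z$ to simplices whose time interval meets $[b,d]$. It happens to work here because $z \subseteq K[b,d]$ forces $z = z_b^d$, but that step must be stated. More seriously, for the three relative cases your one-line claim --- that $\bdry z$ lying in the appropriate sub- or super-level set together with $\winit \subseteq K[s,s]$ forces $\wfinal \subseteq K[f,f]$ --- is not a consequence of those hypotheses alone. The obstacle is exactly the discrepancy $z \neq z_b^d$: $\wfinal = \winit + \bdry z_b^d$ is \emph{not} $\winit + \bdry z$, so you cannot directly invoke the relative-cycle condition on $\bdry z$. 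Making your route rigorous would require (a) observing via \eqref{eq:face-nests} that any $\ssx' \in \bdry z_b^d$ has a coface in $z_b^d$ whose time interval meets $[b,d]$, so $T(\ssx')$ does too, and (b) a cancellation argument splitting into whether $\coeff{\ssx'}{\bdry z} \neq 0$ (relative-cycle condition applies) or $\coeff{\ssx'}{\bdry z} = 0$ (so $\ssx'$ has a coface in $z - z_b^d$, whose location gives the missing inequality). None of this appears in your proposal.

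The paper avoids all of this by arguing locally: suppose the chain $\ssx \times [t_{\tsx_i}, f]$ escapes $\pK$, so the relevant endpoint of $T(\ssx)$ is some $x$ strictly between $t_{\tsx_i}$ and $f$; by \eqref{eq:face-nests} $\ssx$ has no cofaces with time past $x$, so the accumulated coefficient of $\ssx$ in $\wfinal$ equals its coefficient in $\bdry z$, which the relative-cycle condition forbids. This is weaker than the blanket $\wfinal \subseteq K[f,f]$ but exactly what the claim needs, and it sidesteps the $z$-versus-$z_b^d$ bookkeeping entirely.
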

\begin{proof}
    Without loss of generality, assume $s < f$. There are three types of chains:
    \begin{enumerate}
        \item
            $\ssx \times [s,t_{\tsx_1}]$.
            $\ssx$ has a non-zero coefficient on this interval iff
            $\coeff{\ssx}{\winit} \neq 0$ (Line \ref{line:init-coeff}), in which case $\ssx \in K[s,s]$.
            Existence of coface $\tsx_1$ implies
            $\min(\ssx) \leq s \leq t_{\tsx_1} < \max(\ssx)$.
        \item
            $\ssx \times [t_{\tsx_i}, t_{\tsx_{i+1}}]$.
            From \cref{eq:face-nests},
            $\min(\ssx) < t_{\tsx_i} \leq t_{\tsx_{i+1}} < \max(\ssx)$.
        \item
            $\ssx \times [t_{\tsx_i}, f]$.
            Suppose that this chain is not in $\pK$.
            This means $x = \max(\ssx) \in [t_{\tsx_i},f)$. \cref{eq:face-nests}
            implies that $\ssx$ has no cofaces in $[x,f]$.
            Therefore, $\ssx \in \winit + \bdry z_s^x$ and
            $\ssx \notin \bdry z_x^f$. Therefore, $\ssx \in \bdry z$. Since we
            assumed $z$ is a relative cycle in $KI$, it cannot have any boundary
            inside the interval itself --- a contradiction.
            \qedhere
    \end{enumerate}
\end{proof}

\begin{remark*}
    It may seem strange that we state nothing about the relationship between $z$
    and $\pz$. This is because we exploit additional properties of the algorithm
    in \cref{sec:intervals}.
\end{remark*}

\paragraph{Running time.}
Let $m$ be the size of the input $p$-cycle $z$.
\cref{alg:lift-cycle-reinterpreted} requires $O(p \cdot m \log m)$ time.
It goes through all simplices $\ssx$ that are faces of some simplex $\tsx \in z$;
there are at most $(p+1) \cdot m$ such $(p-1)$-simplices. For each one, the
algorithm sorts its cofaces by their times $t_\tsx$.
Although there is no bound on the number of cofaces of one simplex, the total
number of the cofaces to sort is again at most $(p+1) \cdot m$.
Individual operations in the update take constant time, so the overall running
time of \cref{alg:lift-cycle-reinterpreted} is $O(p \cdot m \log m)$. The size of the lifted cycle is $O(p \cdot m)$.

\section{Apex Representatives}
\label{sec:apex-representatives}

We give a characterization of apex representatives in the prism.

\begin{lemma}[Closed endpoint]~
    \label{clm:closed-endpoint}
    \begin{enumerate}
            \item
            If a relative cycle $\pz$, with $[\pz] \in \Hgr(\pK(b,d])$, contains $\tsx \times d$ with $\min \tsx = d$, then
            \[
                [\pz] \not\in \im \left( \Hgr(\pK(b,d-1]) \to \Hgr(\pK(b,d]) \right).
            \]
        \item
            If a relative cycle $\pz$, with $[\pz] \in \Hgr(\pK[b,d))$, contains $\tsx \times b$ with $\max \tsx = b$, then
            \[
                [\pz] \not\in \im \left( \Hgr(\pK[b+1,d)) \to \Hgr(\pK[b,d)) \right).
            \]
        \item
            If an absolute cycle $\pz$, with $[\pz] \in \Hgr(\pK[b,d])$, contains $\ssx \times b$
            and $\tsx \times d$ with $\max \ssx = b$ and $\min \tsx = d$, then
            \[
                [\pz] \not\in \im \left( \Hgr(\pK[b+1,d]) \to \Hgr(\pK[b,d]) \right)
                \quad \text{and} \quad
                [\pz] \not\in \im \left( \Hgr(\pK[b,d-1]) \to \Hgr(\pK[b,d]) \right).
            \]
    \end{enumerate}
\end{lemma}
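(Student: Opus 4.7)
The plan is to argue by contradiction in each part by isolating a cell of $\pz$ at one endpoint of the interval that no relative boundary inside the smaller pair can cancel. For Part 1, I would assume $[\pz] = i_*[\pz']$ with $\pz' \subseteq \pK_{-1}^{d-1}$, and write $\pz - \pz' = \bdry \alpha + \beta$ for some $\alpha \in C_{p+1}(\pK_{-1}^d)$ and $\beta \in C_p(\pK_{-1}^b)$. Since $d > d-1$ and $d > b$, neither $\pz'$ nor $\beta$ can contain the cell $\tsx \times d$, while $\pz$ does by hypothesis; hence $\tsx \times d$ must appear in $\bdry \alpha$, which forces $\alpha$ to contain some $(p+1)$-coface of $\tsx \times d$ that lies in $\pK_{-1}^d$.

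The key --- and really the only nontrivial --- step is then to verify that $\tsx \times d$ admits no such coface. In $\pK$ the candidate $(p+1)$-cells covering $\tsx \times d$ are of three kinds: (i) $\rho \times d$ for a $(p+1)$-simplex $\rho$ with $\tsx \in \bdry \rho$ and $d \in T(\rho)$; (ii) $\tsx \times [d-1, d]$, requiring $d-1 \in T(\tsx)$; and (iii) $\tsx \times [d, d+1]$, requiring $d+1 \in T(\tsx)$. Case (i) is excluded by \cref{eq:face-nests}, which gives $\min(\rho) > \min(\tsx) = d$ and therefore $d \notin T(\rho)$. Case (ii) directly contradicts $\min(\tsx) = d$. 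Case (iii) may be present in $\pK$ but is not contained in $\pK_{-1}^d$ because its time support reaches $d+1$. This rules out all candidates, contradicting $\tsx \times d \in \bdry \alpha$.

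Part 2 follows from the mirror-image argument: for $\tsx \times b$ with $\max(\tsx) = b$, the analogous enumeration of $(p+1)$-cofaces inside $\pK_b^{n+1}$ collapses by \cref{eq:face-nests} (which now gives $\max(\rho) < b$), by the condition $\max(\tsx) = b$, and by the support constraint that $[b-1, b] \not\subseteq [b, n+1]$. For Part 3 the homology is absolute, so any witness reduces to $\pz - \pz' = \bdry \alpha$ with $\alpha \in C_{p+1}(\pK_b^d)$, and the coface analyses for $\ssx \times b$ and for $\tsx \times d$ then apply independently inside $\pK_b^d$ to exclude the two claimed images. The main obstacle throughout is simply bookkeeping the three coface types and making sure each exclusion condition is the right one for the relevant prism pair --- no deeper content is required.
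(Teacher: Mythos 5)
Your proof is correct and follows essentially the same route as the paper: assume the class is in the image, observe that the distinguished vertical cell $\tsx \times d$ must then appear in a boundary, and rule this out by showing it has no coface in the relevant sub-prism via \cref{eq:face-nests}. You are slightly more thorough than the paper's terse argument --- you explicitly carry the mod-$\pK_{-1}^b$ chain and enumerate all three candidate coface types (the simplicial coface $\rho \times d$ plus the two horizontal cells $\tsx \times [d-1,d]$ and $\tsx \times [d,d+1]$), whereas the paper collapses this to the single remark that $\tsx$ has no cofaces in $\subK{d}$ --- but the content is identical.
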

\begin{proof}
    All the statements follow the same proof, except for the endpoints of the
    intervals, so we only show the first.
    Suppose there is some class $[\alpha] \in \Hgr(\pK(b,d-1])$ that
    maps to $[\pz] \in \Hgr(\pK(b,d])$.
    Then $\pz = \alpha + \bdry \beta$ for some chain $\beta$.
    Since $\tsx \in \pz$ and $\tsx \not\in \alpha$ (since $\min \tsx = d > d-1$),
    $\tsx$ must be in $\bdry \beta$. But from \cref{eq:face-nests}, $\tsx$ has no cofaces
    in $\subK{d}$ --- a contradiction.
\end{proof}

\begin{lemma}[Open endpoint]~
    \label{clm:open-endpoint}
    \begin{enumerate}
        \item
            If the boundary $\bdry \pz$ of a relative cycle $\pz$, with $[\pz] \in \Hgr(\pK(b,d])$,
            contains $\ssx \times b$ with $\min \ssx = b$, then
            $
                [\pz] \not\in \im \left( \Hgr(\pK(b-1,d]) \to \Hgr(\pK(b,d]) \right).
            $
        \item
            If the boundary $\bdry \pz$ of a relative cycle $\pz$, with $[\pz] \in \Hgr(\pK[b,d))$
            contains $\ssx \times d$ with $\max \ssx = d$, then
            $
                [\pz] \not\in \im \left( \Hgr(\pK[b,d+1)) \to \Hgr(\pK[b,d)) \right).
            $
        \item
            If the boundary $\bdry \pz$ of a relative cycle $\pz$, with $[\pz] \in \Hgr(\pK(b,d))$,
            contains $\ssx \times b$ with $\min \ssx = b$ and $\tsx \times d$ with $\max \tsx = d$, then
            \[
                \pz \not\in \im \left( \Hgr(\pK(b-1,d)) \to \Hgr(\pK(b,d)) \right)
                \quad \text{and} \quad
                \pz \not\in \im \left( \Hgr(\pK(b,d+1)) \to \Hgr(\pK(b,d)) \right).
            \]
    \end{enumerate}
\end{lemma}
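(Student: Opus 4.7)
The plan is to mirror the structure of the Closed Endpoint lemma but work with $\bdry \pz$ in place of $\pz$, and use the openness of the endpoint to route the contradiction through the subcomplex $\pK_{-1}^b$ that now gets quotiented out. As with the Closed Endpoint lemma, I would prove Part 1 in detail and deduce Parts 2 and 3 by the symmetric substitutions: for Part 2, swap $b \leftrightarrow d$, $\min \leftrightarrow \max$, and $\pK_{-1}^{\cdot} \leftrightarrow \pK_{\cdot}^{n+1}$; for Part 3, apply the Part 1 argument at the $\ssx \times b$ end and the Part 2 argument at the $\tsx \times d$ end of $\pK(b,d) = (\pK, \pK_{-1}^b \cup \pK_d^{n+1})$.

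First, I would unfold what it means for $[\pz]$ to lie in the image from $\Hgr(\pK(b-1,d])$: there must exist chains $\alpha \subseteq \pK_{-1}^d$ with $\bdry \alpha \subseteq \pK_{-1}^{b-1}$, $\beta \subseteq \pK_{-1}^d$, and $\gamma \subseteq \pK_{-1}^b$ such that $\pz = \alpha + \bdry \beta + \gamma$. Unlike in the Closed Endpoint proof, the chain $\gamma$ is essential here, since the hypothesis concerns $\bdry \pz$ rather than $\pz$ itself and there is no reason a priori to expect $\bdry \gamma$ to vanish. Applying $\bdry$ and using $\bdry^2 = 0$, I get $\bdry \pz = \bdry \alpha + \bdry \gamma$. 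Since $\bdry \alpha$ is supported on $\pK_{-1}^{b-1}$ and the cell $\ssx \times b$ has time coordinate $b > b-1$, the hypothesis forces $\ssx \times b$ to appear with nonzero coefficient in $\bdry \gamma$.

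The crux is then to show that no $p$-cell of $\pK_{-1}^b$ can have $\ssx \times b$ in its boundary. I would enumerate the possible cofaces of $\ssx \times b$ in $\pK$: horizontal $\ssx \times [b-1,b]$ and $\ssx \times [b,b+1]$, and vertical $\tsx \times b$ with $\ssx \in \bdry \tsx$. The horizontal cell $\ssx \times [b-1,b]$ does not exist in $\pK$, because $\min \ssx = b$ forces $b-1 \notin T(\ssx)$. The vertical cell $\tsx \times b$ would need $b \in T(\tsx)$, i.e., $\min \tsx \leq b$, but \cref{eq:face-nests} yields $\min \ssx < \min \tsx$, contradicting $\min \ssx = b$. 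The one surviving candidate, $\ssx \times [b,b+1]$, is not contained in $\pK_{-1}^b$ because its time interval escapes past $b$. Hence $\bdry \gamma$ cannot contain $\ssx \times b$, a contradiction.

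The main obstacle I anticipate is the bookkeeping: unlike the Closed Endpoint proof, where a single offending simplex was already present in $\pz$, here the contradiction must be routed through $\bdry \pz$ and both terms $\bdry \alpha$ and $\bdry \gamma$ need to be ruled out, which requires keeping track of which cells of $\pK$ live in the smaller subcomplex $\pK_{-1}^b$ and which cofaces of $\ssx \times b$ can exist in $\pK$ at all. The hypothesis $\min \ssx = b$ is precisely what simultaneously eliminates one horizontal candidate (by the time interval of $\ssx$) and all vertical candidates (via \cref{eq:face-nests}), leaving only a candidate that fails to lie in $\pK_{-1}^b$.
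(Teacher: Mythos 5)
Your proposal is correct and arrives at the same contradiction as the paper, only by a slightly more elementary route: instead of invoking the long exact sequence of the triple $\pK_{-1}^{b-1} \subseteq \pK_{-1}^{b} \subseteq \pK_{-1}^{d}$ to identify the image with the kernel of the connecting homomorphism $\bdry^*$, you unwind the image condition directly at the chain level as $\pz = \alpha + \bdry \beta + \gamma$ and apply $\bdry$. After renaming ($\text{your } \gamma \leftrightarrow \text{the paper's } \beta$, $\text{your } \bdry\alpha \leftrightarrow \text{the paper's } \gamma$), both reach the same key step: $\ssx \times b$ must lie in the boundary of a chain supported in $\pK_{-1}^b$, which is impossible since the cell's only surviving coface $\ssx \times [b,b+1]$ escapes $\pK_{-1}^b$. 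Your explicit enumeration of the three possible cofaces of $\ssx \times b$ is a nice spelling-out of what the paper compresses into ``$\ssx$ has no cofaces in $\subK{b}$'' (which is itself a slight abuse, since the coface statement concerns prism cells rather than simplices of $K$); the extra care is worthwhile but not a different argument.
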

\begin{proof}
    All the statements follow the same proof, except for the endpoints of the
    intervals, so we only show the first.
    From the long exact sequence of the triple~\cite[p.~118]{Hat02}, $\subK{b-1} \subseteq \subK{b} \subseteq \subK{d}$,
    we know that the image in the claim is equal to the kernel of the map
    induced by the boundary,
    \[
        \im \left( \Hgr_*(\pK(b-1,d]) \stackrel{\phantom{i^*}}{\to} \Hgr_*(\pK(b,d]) \right)
            =
        \ker \left( \Hgr_*(\pK(b,d]) \stackrel{\bdry^*}{\to} \Hgr_{*-1}(\pK(b-1,b]) \right).
    \]
    For $[\pz]$ to be in the kernel, $\bdry \pz$ needs to be a relative boundary in
    $\pK(b-1,b]$, i.e., $\bdry \pz = \bdry \beta + \gamma$ for some $\beta
    \subseteq \subK{b}$ and $\gamma \subseteq \subK{b-1}$.
    Since $\ssx \in \bdry \pz$ and $\ssx \not\in \gamma$, $\ssx$ must be in $\bdry \beta$.
    But from \cref{eq:face-nests}, $\ssx$ has no cofaces in $\subK{b}$ --- a contradiction.
\end{proof}

Putting \cref{clm:closed-endpoint,clm:open-endpoint} together, we get the
following theorem.
\begin{theorem}~
    \label{thm:apex-representatives}

    \begin{enumerate}
        \item
            If a relative cycle $\pz$, with $[\pz] \in \Hgr(\pK(b,d])$, contains $\tsx \times d$ with $\min \tsx = d$,
            and its boundary $\bdry \pz$ contains $\ssx \times b$ with $\min \ssx = b$, then $\pz$ is an apex representative.
        \item
            If a relative cycle $\pz$, with $[\pz] \in \Hgr(\pK[b,d))$, contains $\tsx \times b$ with $\max \tsx = b$,
            and its boundary $\bdry \pz$ contains $\ssx \times d$ with $\max \ssx = d$, then $\pz$ is an apex representative.
        \item
            If an absolute cycle $\pz$, with $[\pz] \in \Hgr(\pK[b,d])$, contains $\ssx \times b$
            and $\tsx \times d$ with $\max \ssx = b$ and $\min \tsx = d$, then
            $\pz$ is an apex representative.
        \item
            If the boundary $\bdry \pz$ of a relative cycle $\pz$, with $[\pz] \in \Hgr(\pK(b,d))$,
            contains $\ssx \times b$ with $\min \ssx = b$ and $\tsx \times d$ with $\max \tsx = d$, then
            $\pz$ is an apex representative.
    \end{enumerate}
\end{theorem}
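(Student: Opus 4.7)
The plan is to derive all four parts of the theorem as a direct corollary of \cref{clm:closed-endpoint,clm:open-endpoint}. Recall from the background that an apex class in each of the four quadrants of the Mayer--Vietoris pyramid is characterized by failing to lie in the image of \emph{either} of the two maps into the apex coming from spaces obtained by moving one endpoint of the interval inward by $\ee$. On the integer grid of $\pK$ we have $\ee = 1$, so the two non-image conditions that must be checked are precisely the conclusions produced by the two preceding lemmas.

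The strategy is to match each endpoint of each interval type to the appropriate case of the appropriate lemma. A \emph{closed} endpoint is witnessed by a cell $\tsx \times (\text{endpoint})$ inside $\pz$ whose time at that endpoint is extremal, so \cref{clm:closed-endpoint} applies; an \emph{open} endpoint is witnessed by a cell $\ssx \times (\text{endpoint})$ inside $\bdry \pz$ whose time at that endpoint is extremal, so \cref{clm:open-endpoint} applies. Concretely: Part 1 (on $\pK(b,d]$) combines Case 1 of \cref{clm:closed-endpoint} at $d$ with Case 1 of \cref{clm:open-endpoint} at $b$; Part 2 (on $\pK[b,d)$) combines the Case 2 of each lemma; Part 3 (on $\pK[b,d]$) is exactly Case 3 of \cref{clm:closed-endpoint} since both endpoints are closed; Part 4 (on $\pK(b,d)$) is exactly Case 3 of \cref{clm:open-endpoint} since both endpoints are open.

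There is essentially no obstacle: the hypotheses of the theorem have been deliberately packaged so that each endpoint condition matches, term for term, a hypothesis of one of the preceding lemmas, and the conjunction of the two conclusions yields precisely the apex characterization listed in the background. The only care required is in confirming, for each of the four parts, that the quadrant type is paired with the correct cases of the correct lemmas — in particular, that ``closed at $b$'' triggers a $\max \ssx = b$ condition while ``open at $b$'' triggers a $\min \ssx = b$ condition (and symmetrically at $d$). Once this bookkeeping is done, no new argument is needed beyond invoking the two lemmas.
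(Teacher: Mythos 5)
Your proof takes exactly the paper's route: the theorem is presented with only the one-line justification ``Putting \cref{clm:closed-endpoint,clm:open-endpoint} together, we get the following theorem,'' and your case-by-case matching (Part 1 from Case 1 of each lemma, Part 2 from Case 2 of each, Part 3 from Case 3 of \cref{clm:closed-endpoint} alone, Part 4 from Case 3 of \cref{clm:open-endpoint} alone) is precisely the bookkeeping the authors leave implicit.
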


\section{Four Intervals}
\label{sec:intervals}

We assume $R=DV$ is the decomposition of the boundary matrix $D$ of the cone $\cK$
filtration in \cref{eq:cone-filtration}. We use $\ssx,\tsx$ to refer to
simplices in the base space $K$, and $\cssx,\ctsx$ for the simplices in the
cone $\cK - K$.

\paragraph{Summary.}
The main content of this section is summarized in \cref{tbl:chain-summary}:
lifting the stated cycles $z$ in $K$, derived from the lazy reduction, with the
given arguments to the algorithm \algname{Lift-Cycle}, produces apex
representatives in $\pK$.
The results of the section are more general, but also more verbose; they derive the expression for the
cycles for any reduction.
\begin{table}[h]
    \centering
    \ra{1.3}    %
    \begin{tabular}{@{}lcll@{\quad}cc}
        \toprule
        EP type          & LZZ type      & Apex     & $z$               & $(s,f)$ & $\winit$\\
        \midrule
        Ordinary         & closed-open   & $K(b,d]$ & $V[\tsx]$         & $(d,b)$ & $0$ \\
        \midrule
        Relative         & open-closed   & $K[b,d)$ & $R[\ctsx] \cap K$ & $(b,d)$ & $0$ \\
        \midrule
        Extended $(d,b)$ & open-open     & $K[b,d]$ & $R[\ctsx]$        & $(d,b)$ & $0$ \\
        \midrule
        Extended $(b,d)$ & closed-closed & $K(b,d)$ & $V[\ctsx] \cap K$ & $(b,d)$ & $-R[\ctsx]$ \\
        \bottomrule
    \end{tabular}
    \caption{Summary of the chains and arguments used in different cases of the
    \algname{Lift-Cycle} algorithm, derived from the lazy reduction.}
    \label{tbl:chain-summary}
\end{table}

\subsection{Ordinary (closed-open)}
\label{sec:ordinary}

Throughout this subsection, we assume the following \textbf{setting}:
\[
    \text{a pair $(\ssx,\tsx)$ in the cone filtration, born at $b$ and dying at $d$, i.e., $\min(\ssx) = b < d = \min(\tsx)$.}
\]

    \includegraphics[page=3]{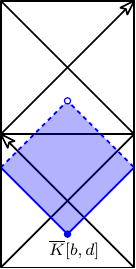}
        \quad
    \includegraphics[page=3]{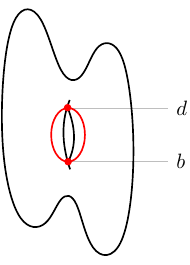}

    \begin{claim}
        \label{clm:ordinary-structure}
        Let $z = V[\tsx]$. Then $[z] \in \Hgr(K(b,d])$.
        Furthermore, $\tsx \in z$ and $\ssx \in \bdry z = R[\tsx]$.
    \end{claim}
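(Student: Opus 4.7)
The plan is to unpack the $R=DV$ decomposition together with the specific ordering of the cone filtration, and then verify the three assertions in turn. The two straightforward parts follow directly from definitions. Since $V$ is upper-triangular with $1$'s on the diagonal, $\tsx$ appears with non-zero coefficient in $V[\tsx] = z$, giving $\tsx \in z$. From $R = DV$ we get $\bdry z = \bdry V[\tsx] = DV[\tsx] = R[\tsx]$, and by hypothesis $\ssx = \low R[\tsx]$ is a pivot, so $\ssx \in R[\tsx] = \bdry z$.

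The remaining content is the assertion $[z] \in \Hgr(K(b,d])$, which unwinds to showing $z \subseteq \subK{d}$ and $\bdry z = R[\tsx] \subseteq \subK{b}$. For $z \subseteq \subK{d}$, I would use upper-triangularity of $V$ to conclude that the support of $V[\tsx]$ is contained in the set of simplices coming at or before $\tsx$ in the cone filtration. Since $\min(\tsx) = d$ is defined, $\tsx$ is a base space simplex, and the cone filtration lists \emph{all} base simplices strictly before any cone simplex; hence every $\tsx' \in V[\tsx]$ is also a base simplex. Among base simplices, the filtration order is by increasing $\min$, so $\min(\tsx') \leq \min(\tsx) = d$, i.e., $\tsx' \in \subK{d}$.

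For $R[\tsx] \subseteq \subK{b}$, I would use the pivot property: every simplex appearing in $R[\tsx]$ comes at or before $\ssx = \low R[\tsx]$ in the cone filtration. Combined with the previous observation that $V[\tsx]$ is supported on base simplices, $R[\tsx] = DV[\tsx]$ is also supported on base simplices. Hence every $\ssx' \in R[\tsx]$ has $\min(\ssx') \leq \min(\ssx) = b$, placing it in $\subK{b}$. Together, $z$ is a chain in $\subK{d}$ whose boundary lies in $\subK{b}$, so $[z] \in \Hgr(\subK{d}, \subK{b}) = \Hgr(K(b,d])$.

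The main obstacle is really a bookkeeping one: keeping straight that ``pivot below $\ssx$'' and ``upper-triangular below $\tsx$'' translate, through the cone filtration's ordering of base simplices by increasing $\min$, into the sublevel-set containments $\subK{b}$ and $\subK{d}$. Once this dictionary is in place, all three parts of the claim are immediate consequences of the definition of $R = DV$, without needing any appeal to the lazy-reduction refinements from \cref{lem:lazy-reduction}.
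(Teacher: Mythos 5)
Your proof is correct and follows essentially the same approach as the paper's: use upper-triangularity of $V$ to get $V[\tsx] \subseteq \subK{d}$, use the pivot property of $R$ to get $R[\tsx] = \bdry V[\tsx] \subseteq \subK{b}$, and conclude $[z] \in \Hgr(\subK{d},\subK{b}) = \Hgr(K(b,d])$. The only thing you do beyond the paper's terse version is spell out explicitly why $V[\tsx]$ and hence $R[\tsx]$ consist solely of base-space simplices (because base cells precede all cone cells and $\tsx$ is a base cell), which the paper leaves implicit; your observation that no lazy-reduction property is needed for this claim also matches the paper.
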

    \begin{proof}
        Since $V$ is upper-triangular with all diagonal entries non-zero,
        $\tsx$ is the latest simplex in $V[\tsx]$, i.e.,
        $\max \left\{ \min(\tsx') \mid \tsx' \in V[\tsx] \right\} = \min(\tsx) = d$.
        Therefore, $V[\tsx] \subseteq \subK{d}$.
        By definition, the boundary
        $\bdry V[\tsx] = R[\tsx]$. $\ssx$ is its latest simplex, i.e.,
        $\max \left\{ \min(\ssx') \mid \ssx' \in R[\tsx] \right\} = \min(\ssx) = b$.
        Therefore, $\bdry V[\tsx] \subseteq \subK{b}$.
        It follows that if $z = V[\tsx]$, then $[z] \in \Hgr(\subK{d}, \subK{b}) = \Hgr(K(b,d])$.
    \end{proof}

\begin{claim}%
    \label{clm:ordinary-correct}
    Relative cycle $\pz$ produced by the call $\algname{Lift-Cycle}(V[\tsx], (d,b), 0)$
    is an apex representative in $\Hgr(\pK(b,d])$.
\end{claim}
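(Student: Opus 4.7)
The plan is to verify the three hypotheses of Theorem~\ref{thm:apex-representatives}(1): that $[\pz] \in \Hgr(\pK(b,d])$, that $\pz$ contains $\tsx \times d$ with $\min(\tsx) = d$, and that $\bdry \pz$ contains $\ssx \times b$ with $\min(\ssx) = b$. The two minimality conditions on $\tsx$ and $\ssx$ are exactly the standing setting $\min(\tsx) = d$ and $\min(\ssx) = b$, so only the three chain-level containments require argument.

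For the first hypothesis, I invoke Claim~\ref{clm:lifted-boundary} with $\winit = 0$: the lifted cycle has $\bdry \pz = \wfinal \times b$ where $\wfinal = \bdry z_b^d$, which is supported at time $b$ and hence lies in $\pK_{-1}^b$. Meanwhile \algname{Lift-Cycle} emits only cells with times in $[b,d]$, so $\pz$ itself lies in $\pK_{-1}^d$, making $\pz$ a relative cycle of the pair $\pK(b,d] = (\pK_{-1}^d, \pK_{-1}^b)$. For the second, Remark~\ref{rmk:generic-time-intervals} gives $\max(\tsx) > \min(\tsx) = d$, so $T(\tsx) \cap [b,d] = \{d\}$ forces the algorithm to pick $t_\tsx = d$; combined with $\coeff{\tsx}{V[\tsx]} \neq 0$ (since $V$ is upper triangular with nonzero diagonal), this yields $\coeff{\tsx \times d}{\pz} \neq 0$.

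The only step with real content is the third, and it is the main obstacle. Claim~\ref{clm:ordinary-structure} supplies $\bdry z = R[\tsx]$ and $\ssx \in R[\tsx]$, so I must argue that restricting from $z$ to $z_b^d$ does not cancel the coefficient of $\ssx$ in the boundary. The chain $z_b^d$ drops precisely those $\tsx' \in z$ with $T(\tsx') \cap [b,d] = \emptyset$; since $z \subseteq \subK{d}$ already forces $\min(\tsx') \leq d$, any dropped simplex must satisfy $\max(\tsx') < b$. If $\ssx$ were a face of such a dropped $\tsx'$, then equation~\eqref{eq:face-nests} would give $\min(\ssx) < \min(\tsx') \leq \max(\tsx') < b$, contradicting $\min(\ssx) = b$. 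Hence the dropped simplices contribute nothing to $\ssx$, so $\coeff{\ssx}{\bdry z_b^d} = \coeff{\ssx}{R[\tsx]} \neq 0$, yielding $\ssx \times b \in \bdry \pz$. All three hypotheses of Theorem~\ref{thm:apex-representatives}(1) now hold, and $\pz$ is an apex representative in $\Hgr(\pK(b,d])$.
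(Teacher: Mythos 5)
Your proof is correct and follows essentially the same route as the paper: verify the hypotheses of Theorem~\ref{thm:apex-representatives}(1) via Claim~\ref{clm:lifted-boundary} with $\winit = 0$, and use \cref{eq:face-nests} together with $z \subseteq \subK{d}$ to conclude that $\coeff{\ssx}{\bdry z_b^d} = \coeff{\ssx}{\bdry z} \neq 0$. The only cosmetic differences are that you argue the contrapositive (dropped simplices cannot have $\ssx$ as a face) where the paper argues directly (every coface of $\ssx$ in $z$ has its chosen time in $[b,d]$), and you explicitly spell out the membership $[\pz] \in \Hgr(\pK(b,d])$, which the paper leaves to the algorithm's stated output guarantee.
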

\begin{proof}
    Let $z = V[\tsx]$.
    \begin{enumerate}
        \item
            $\ssx \times b \in \bdry \pz$, $\min(\ssx) = b$.

            Because $\ssx \in \bdry z$ (\cref{clm:ordinary-structure}) and $\min(\ssx) = b$, its coboundary
            $\cbdry \ssx \subseteq \supK{b}$. Meanwhile, $z \subseteq \subK{d}$. Therefore,
            all cofaces of $\ssx$ in $z$ intersect $[b,d]$. Therefore, $\ssx \in
            \bdry \left( z \cap [b,d] \right)$. It follows from
            \cref{clm:lifted-boundary} that
            \[
                \ssx \times b \in \bdry \pz
                    = \wfinal \times b
                    = \bdry \left( z \cap [b,d] \right) \times b.
            \]

        \item
            $\tsx \times d \in \pz$, $\min(\tsx) = d$.

            This follows immediately since $\tsx \in z$ (\cref{clm:ordinary-structure}) and
            $T(\tsx) \cap [b,d] = \{ d \}$.
    \end{enumerate}
    It follows from \cref{thm:apex-representatives} that $\pz$ is
    an apex representative in $\Hgr(\pK(b,d])$.
\end{proof}

\subsection{Relative (open-closed)}
\label{sec:relative}

Throughout this subsection, we assume the following \textbf{setting}:
\[
    \text{a pair $(\cssx,\ctsx)$ in the cone filtration, born at $d$ and dying at $b$,
    i.e., $\max(\ssx) = d > b = \max(\tsx)$.}
\]

    \includegraphics[page=2]{figures/pyramid}
        \quad
    \includegraphics[page=2]{figures/representatives}

    \begin{claim}
        \label{clm:relative-structure}
        Let chain $z$ consist of the
        boundary of the cone simplices in $V[\ctsx]$ restricted to the base
        space,
        \[
            z = \bdry \left(V[\ctsx] \cap (\cK - K) \right) \cap K.
        \]
        Then $[z] \in \Hgr(K[b,d))$.
        Furthermore, $\tsx \in z$ and $\ssx \in \bdry z$.
    \end{claim}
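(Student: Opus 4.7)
The plan is to split $V[\ctsx]$ into its base and cone parts and show that the cone part alone encodes both $z$ and $\bdry z$. Write $W = V[\ctsx] \cap (\cK - K)$. Since every simplex in $\cK - K$ has the form $\omega * \eta$ for some $\eta \in K$, we may write $W = \omega * A$ for a unique chain $A$ in $K$. The standard cone boundary formula $\bdry(\omega * \eta) = \eta - \omega * \bdry \eta$ then gives $\bdry W = A - \omega * \bdry A$, whose base part equals $A$ and whose cone part equals $-\omega * \bdry A$. By the very definition of $z$, this already identifies $z = A$. Moreover, since the base portion $V[\ctsx] \cap K$ has boundary entirely in $K$, the cone piece of $R[\ctsx] = \bdry V[\ctsx]$ must come from $\bdry W$, so
\[
    R[\ctsx] \cap (\cK - K) = -\omega * \bdry z.
\]

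Next I verify the two inclusions required to place $[z]$ in $\Hgr(\supK{b}, \supK{d}) = \Hgr(K[b,d))$. Because $V$ is upper-triangular with $\ctsx = \omega * \tsx$ on the diagonal of column $V[\ctsx]$, every cone simplex $\omega * \eta$ appearing in $W$ has filtration index at most that of $\ctsx$; since cone simplices are ordered by decreasing $\max$, this forces $\max(\eta) \geq \max(\tsx) = b$, hence $z = A \subseteq \supK{b}$. Symmetrically, the pivot $\low R[\ctsx] = \cssx = \omega * \ssx$ has $\max(\ssx) = d$, and every cone simplex in $R[\ctsx]$ is either $\cssx$ itself or occurs earlier in the cone ordering, and so has $\max \geq d$. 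Combined with the identity boxed above, this yields $\bdry z \subseteq \supK{d}$.

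The two supplementary statements drop out of the same correspondence. The diagonal entry $\ctsx = \omega * \tsx$ sits in $V[\ctsx]$ with coefficient $1$, so it contributes $\tsx$ with coefficient $1$ to $A = z$. Likewise, $\cssx = \omega * \ssx$ is the pivot of $R[\ctsx]$ and so appears in $R[\ctsx] \cap (\cK - K)$ with a nonzero coefficient, which via $R[\ctsx] \cap (\cK - K) = -\omega * \bdry z$ translates into a nonzero coefficient of $\ssx$ in $\bdry z$. The only real obstacle I anticipate is clean bookkeeping: one must carefully separate the base and cone parts of $V[\ctsx]$ and $R[\ctsx]$ and track the minus sign from the cone boundary formula, but once that split is set up, every assertion reduces to the triangular structure of $V$ and the position of the pivot $\cssx$ in $R$.
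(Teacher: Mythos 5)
Your proof is correct and takes essentially the same approach as the paper's: both argue by relating base simplices in $z$ and $\bdry z$ to cone simplices appearing in $V[\ctsx]$ and $R[\ctsx]$, and then reading off filtration positions from the pivot locations of $\ctsx$ and $\cssx$. Your explicit use of the cone chain formula $W = \omega * A$, $\bdry W = A - \omega * \bdry A$, and the resulting identity $R[\ctsx] \cap (\cK - K) = -\omega * \bdry z$ cleanly packages the same bookkeeping that the paper carries out simplex by simplex when it observes that ``cone simplices can only be in the boundaries of cone simplices'' and that $\cssx$ is the unique cone simplex with $\ssx$ in its boundary.
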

    \begin{proof}
        Let $\gamma = V[\ctsx] \cap (\cK - K)$, i.e., $z = \left( \bdry \gamma \right) \cap K$.

        \begin{enumerate}
            \item
                $z \subseteq \supK{b}$, $\tsx \in z$.

                Because $\gamma$ consists of only cone simplices, every simplex
                $\tsx' \in z$ is in the boundary of some cone simplex
                $\ctsx'$. Because
                $\ctsx$ is the latest simplex in $V[\ctsx]$, we have that
                $\ctsx'$ is added before $\ctsx$. Therefore,
                $\max(\tsx') \geq \max(\tsx) = b$.
                Because $\ctsx$ is the only simplex in $\gamma$ with $\tsx$ in
                its boundary, $\tsx \in \bdry \gamma \cap K = z$.

            \item
                $\bdry z \subseteq \supK{d}$.

                Let
                \[
                    \alpha = \bdry z = - \bdry \left( \bdry \gamma \cap (\cK - K) \right)
                \]
                be the shared boundary between the base space and the
                cone parts of $\bdry \gamma$.
                We note that because cone simplices can only be in the boundary
                of cone simplices,
                $( \bdry V[\ctsx] ) \cap (\cK - K) = \bdry \gamma \cap (\cK - K)$.
                Therefore, all the cone simplices $\cssx' \in \bdry \gamma \cap (\cK - K))$
                are such that $\max(\ssx') \geq \max(\ssx) = d$. Therefore, any
                simplex in the shared boundary $\alpha$ is in $\supK{d}$.

                Putting the first two sub-claims together, $[z] \in \Hgr(\supK{b}, \supK{d}) = \Hgr(K[b,d))$.
            \item
                $\ssx \in \bdry z$.

                $(\cssx, \ctsx)$ is a persistence pair, therefore, $\cssx \in \bdry V[\ctsx]$.
                Because cone simplices can only be
                in the boundaries of cone simplices, $\cssx \in \bdry \gamma$.
                Because $\cssx$ is the only simplex in $\left( \bdry \gamma \right) \cap (\cK - K)$
                that has $\ssx$ in its boundary, it follows that
                \[
                    \ssx \in \bdry z
                        = \bdry \left( \left( \bdry \gamma \right) \cap K \right)
                        = \bdry \left( \left( \bdry \gamma \right) \cap (\cK - K) \right).
                \qedhere
                \]
        \end{enumerate}
    \end{proof}

    \begin{remark}[Lazy reduction]
        If $R = DV$ is obtained by lazy reduction, it follows from
        \cref{cor:lazy-nested} that $V[\ctsx]$ consists only of cone simplices.
        In other words, the chain in the prior claim simplifies to
        \[
            z = \left( \bdry V[\ctsx] \right) \cap K = R[\ctsx] \cap K.
        \]
    \end{remark}

\begin{claim}%
    \label{clm:relative-correct}
    Relative cycle $\pz$ produced by the call $\algname{Lift-Cycle}(R[\ctsx] \cap K, (b,d), 0)$
    is an apex representative in $\Hgr(\pK[b,d))$.
\end{claim}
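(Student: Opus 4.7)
The plan is to invoke \cref{thm:apex-representatives}(2), which reduces the claim to three verifications: that $[\pz] \in \Hgr(\pK[b,d))$, that $\pz$ contains a cell $\tsx \times b$ with $\max(\tsx) = b$, and that $\bdry \pz$ contains a cell $\ssx \times d$ with $\max(\ssx) = d$. \cref{clm:relative-structure}, which records the structure of $z = R[\ctsx] \cap K$, and \cref{clm:lifted-boundary}, which records the boundary of the lifted cycle, supply everything I need; the overall skeleton mirrors \cref{clm:ordinary-correct}.

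First I would compute $\bdry \pz$. Since $\winit = 0$ and $(s,f) = (b,d)$, \cref{clm:lifted-boundary} gives $\bdry \pz = \wfinal \times d$ with $\wfinal = \bdry z_b^d$. Every cell of $\pz$ lives at a time in $[b,d]$, so $\pz \subseteq \pK_b^{n+1}$, and $\bdry \pz$ is concentrated at time $d$. To land in $\pK[b,d) = (\pK_b^{n+1}, \pK_d^{n+1})$ I must show that each simplex $\ssx''$ appearing in $\wfinal$ has $\max(\ssx'') \geq d$. Write $z = z_b^d + z'$, where $z' = \sum_{\tsx \in z,\, \min(\tsx) > d} \coeff{\tsx}{z} \cdot \tsx$ collects the simplices the algorithm silently discards. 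By \cref{clm:relative-structure}, $\bdry z \subseteq \supK{d}$, and \cref{eq:face-nests} forces every simplex in $\bdry z'$ to have $\max > d$; subtracting, $\bdry z_b^d = \bdry z - \bdry z'$ still lies in $\supK{d}$, giving $[\pz] \in \Hgr(\pK[b,d))$.

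Next I would check the two marker cells. \cref{clm:relative-structure} puts $\tsx$ in $z$, and since $\max(\tsx) = b$ the only available time is $t_\tsx = b$, so $\tsx \times b$ is emitted by the first loop of the algorithm. For the boundary marker, \cref{clm:relative-structure} gives $\ssx \in \bdry z$ with $\max(\ssx) = d$; \cref{eq:face-nests} then forces every coface of $\ssx$ to have $\max < d$, so no coface of $\ssx$ lies in $z'$ (whose simplices satisfy $\min > d > \max(\ssx')$), and hence $\ssx \notin \bdry z'$. Combined with $\ssx \in \bdry z = \bdry z_b^d + \bdry z'$, this forces $\ssx \in \wfinal$, i.e., $\ssx \times d \in \bdry \pz$. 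Applying \cref{thm:apex-representatives}(2) then completes the argument.

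The only delicate point is the bookkeeping for the tail $z'$: simplices of $z$ with $\min > d$ are silently ignored by the algorithm, and one must check that their omission neither removes the marker $\tsx \times b$ (it does not, since $\max(\tsx) = b$ places $\tsx$ in $z_b^d$) nor spuriously contaminates $\bdry \pz$ at time $d$ with simplices outside $\supK{d}$ (it does not, since $\bdry z'$ itself lives in $\supK{d}$). Once that tail analysis is in hand, everything else is a direct translation from $z$ to $\pz$ via \cref{clm:lifted-boundary}.
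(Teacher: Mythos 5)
Your proof is correct and takes essentially the same route as the paper's: both verify the two marker cells ($\tsx \times b \in \pz$ and $\ssx \times d \in \bdry\pz$) using \cref{clm:relative-structure} and \cref{clm:lifted-boundary}, then invoke \cref{thm:apex-representatives}(2). Your explicit tail analysis of $z'$ is a more verbose rephrasing of the paper's observation that every coface of $\ssx$ in $z$ has time in $[b,d]$, and the membership $[\pz] \in \Hgr(\pK[b,d))$ that you re-derive is already guaranteed by the general correctness claims in \cref{sec:lifting}.
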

\begin{proof}
    Let $z = R[\ctsx] \cap K$.
    \begin{enumerate}
        \item
            $\ssx \times d \in \bdry \pz$, $\max(\ssx) = d$.

            Because $\ssx \in \bdry z$ (\cref{clm:relative-structure}) and $\max(\ssx) = d$, its coboundary
            $\cbdry \ssx \subseteq \subK{d}$. Meanwhile, $z \subseteq \supK{b}$. Therefore,
            all cofaces of $\ssx$ in $z$ intersect $[b,d]$. Therefore, $\ssx \in
            \bdry \left( z \cap [b,d] \right)$. It follows from
            \cref{clm:lifted-boundary} that
            \[
                \ssx \times d \in \bdry \pz
                    = \wfinal \times d
                    = \bdry \left( z \cap [b,d] \right) \times d.
            \]

        \item
            $\tsx \times b \in \pz$, $\max(\tsx) = b$.

            This follows immediately since $\tsx \in z$ (\cref{clm:relative-structure}) and
            $T(\tsx) \cap [b,d] = \{ b \}$.
    \end{enumerate}
    It follows from \cref{thm:apex-representatives} that $\pz$ is
    an apex representative in $\Hgr(\pK[b,d))$.
\end{proof}

\subsection{Extended (open-open)}
\label{sec:extended-oo}

Throughout this subsection, we assume the following \textbf{setting}:
\[
    \text{a pair $(\ssx, \ctsx)$ in the cone filtration born at $d$, dying at $b$, i.e.,
    $\min(\ssx) = d > b = \max(\tsx)$.}
\]

    \includegraphics[page=1]{figures/pyramid}
        \quad
    \includegraphics[page=1]{figures/representatives}

    \begin{claim}
        \label{clm:extended-oo-structure}
        Let chain $c$ consist of those
        simplices in $V[\ctsx]$ that are either in the cone, or in $\supK{b}$,
        i.e.,
        \[
            c = V[\ctsx] \cap (\cK - K) + V[\ctsx] \cap \supK{b}.
        \]
        Let $z = \bdry c$; then $[z] \in \Hgr(K[b,d])$.
        Furthermore, $\ssx, \tsx \in z$.
    \end{claim}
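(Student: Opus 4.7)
The plan is to extend the strategy of \cref{clm:relative-structure}, paying attention to the base simplices that a general (non-lazy) reduction may place in $V[\ctsx]$. Decompose $V[\ctsx] = \gamma + \beta_1 + \beta_2$ with $\gamma = V[\ctsx] \cap (\cK - K)$ the cone portion, $\beta_1 = V[\ctsx] \cap \supK{b}$, and $\beta_2 = V[\ctsx] \cap (K \setminus \supK{b})$ the remaining base simplices (those with $\max < b$), so that $c = \gamma + \beta_1 = V[\ctsx] - \beta_2$. I will carry two equivalent expressions for $z = \bdry c$: the $R$-side $z = R[\ctsx] - \bdry \beta_2$ and the $\gamma$-side $z = \bdry \gamma + \bdry \beta_1$, using $\bdry(\omega * \ssx') = \ssx' - \omega * \bdry \ssx'$ on individual cone simplices.

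The key observation, and the main obstacle, is that the pivot $\ssx = \low R[\ctsx]$ is a \emph{base} simplex. Since every base simplex precedes every cone simplex in the cone filtration, $R[\ctsx]$ cannot contain any cone entries. As in the proof of \cref{clm:relative-structure}, the cone part of $R[\ctsx] = \bdry V[\ctsx]$ coincides with the cone part of $\bdry \gamma$, because base simplices in $V[\ctsx]$ contribute none. Hence $\bdry \gamma$ has no cone part; combined with $\bdry \beta_1 \subseteq K$, this gives $z \subseteq K$, and $z$ is a cycle as the boundary of a chain.

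To show $z \subseteq K[b,d] = \supK{b} \cap \subK{d}$, I use both expressions. From the $\gamma$-side: every cone simplex $\omega * \ssx' \in \gamma$ precedes $\ctsx$ in the filtration, so $\max(\ssx') \geq b$, placing the base contribution of $\bdry \gamma$ inside $\supK{b}$; and $\beta_1 \subseteq \supK{b}$ with $\bdry \beta_1 \subseteq \supK{b}$ by \cref{eq:face-nests}. From the $R$-side: $R[\ctsx] \subseteq \subK{d}$ because its non-zero entries are base simplices no later than $\ssx$ in the filtration (and $\min(\ssx) = d$), while $\beta_2 \subseteq \subK{b-1} \subseteq \subK{d}$ gives $\bdry \beta_2 \subseteq \subK{d}$ by \cref{eq:bdry-sub}.

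Finally, for the memberships $\ssx, \tsx \in z$, I pick the convenient expression in each case. On the $R$-side, $\ssx$ appears in $R[\ctsx]$ as the pivot, while $\bdry \beta_2 \subseteq \subK{b-1}$ cannot contain $\ssx$ (since $\min(\ssx) = d > b - 1$), so $\coeff{\ssx}{z} = \coeff{\ssx}{R[\ctsx]} \neq 0$. On the $\gamma$-side, the diagonal entry $\coeff{\ctsx}{V[\ctsx]} = 1$ contributes $\tsx$ to $\bdry \gamma$ with coefficient $1$ via $\bdry \ctsx = \tsx - \omega * \bdry \tsx$; and no coface of $\tsx$ can lie in $\beta_1 \subseteq \supK{b}$ because \cref{eq:face-nests} forces every such coface to satisfy $\max < b$, so $\coeff{\tsx}{\bdry \beta_1} = 0$ and $\coeff{\tsx}{z} = 1$.
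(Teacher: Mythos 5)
Your proof is correct and follows essentially the same route as the paper's: you split $V[\ctsx]$ into the same three parts (your $\beta_2, \beta_1, \gamma$ are the paper's $\alpha, \beta, \gamma$), argue $z \subseteq \subK{d}$ from the $R$-side and $z \subseteq \supK{b}$ from the $\gamma$-side, and extract $\ssx, \tsx \in z$ from the same non-cancellation observations. The only cosmetic differences are that you explicitly note $R[\ctsx]$ has no cone entries (the paper uses this implicitly via the pivot being latest), and you leave implicit that $\ctsx$ is the unique simplex in $\gamma$ with $\tsx$ in its boundary (which the paper states explicitly).
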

    \begin{proof}
        We split the chain $V[\ctsx]$ into three parts:
        \begin{align*}
            & \alpha = V[\ctsx] \cap (K - \supK{b}), &
            & \beta  = V[\ctsx] \cap \supK{b}, &
            & \gamma = V[\ctsx] \cap (\cK - K),
        \end{align*}
        and re-write its boundary accordingly:
        $
            R[\ctsx] = \bdry V[\ctsx] = \bdry \alpha + \bdry \beta + \bdry \gamma.
        $
        $c = (\beta + \gamma) = V[\ctsx] - \alpha$.

        \begin{enumerate}
            \item
                $z = \bdry c \subseteq \subK{d}$, $\ssx \in z$.

                $\bdry V[\ctsx] \subseteq \subK{d}$ because $\ssx$ is the latest
                simplex in $R[\ctsx]$.
                Putting together \cref{eq:sup-sub,eq:bdry-sub},
                $\alpha \subseteq K - \supK{b}$ implies $\bdry \alpha \subseteq \subK{b} \subseteq \subK{d}$.
                Therefore, $z = \bdry V[\ctsx] - \bdry \alpha \subseteq \subK{d}$.

                Because $\ssx \in \bdry V[\ctsx]$, but $\ssx \not\in \bdry \alpha$,
                $\ssx \in z$.

            \item
                $z \subseteq \supK{b}$, $\tsx \in z$.

                Since chain $\beta \subseteq \supK{b}$, its boundary $\bdry \beta \subseteq \supK{b}$.
                Since boundaries $\bdry V[\ctsx], \bdry \alpha, \bdry \beta \subseteq K$, so is $\bdry \gamma$.
                Every simplex in $\gamma$ has the form $\omega * \tsx$, where
                $\tsx \in \supK{b}$. Therefore, $\bdry \gamma \subseteq \supK{b}$.
                It follows that $z = \bdry \beta + \bdry \gamma \subseteq \supK{b}$.

                Because $\max(\tsx) = b$, it has no cofaces in $\supK{b}$.
                Therefore, $\tsx \not\in \bdry \beta$. Because $\ctsx$ is the
                only simplex in $\gamma$ that has $\tsx$ in its boundary,
                $\tsx \in \bdry \gamma$. Therefore, $\tsx \in z = \bdry \beta + \bdry \gamma$.
        \end{enumerate}

        Therefore, $[z] \in \Hgr(\supK{b} \cap \subK{d}) = \Hgr(K[b,d])$.
    \end{proof}

    \begin{remark}[Lazy reduction]
        Furthermore, if $R = DV$ comes from the lazy reduction, then the next claim proves that
        in the previous claim, chain $c = V[\ctsx]$.
        In other words, we can use $z = R[\ctsx] = \bdry V[\ctsx]$ directly.
    \end{remark}

    \begin{claim}[Lazy reduction]
        \label{clm:lazy-V-sup}
        If $R=DV$ is obtained from the lazy reduction, then all the base space
        simplices in $V[\ctsx]$ are in $\supK{b}$, i.e., $V[\ctsx] \cap K \subseteq \supK{b}$.
    \end{claim}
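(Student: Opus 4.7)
The plan is to apply Corollary~\ref{cor:lazy-nested} to each entry $V[\tsx', \ctsx] \neq 0$ and then translate the resulting filtration nesting into the desired super-levelset containment. Suppose $\tsx' \in V[\ctsx] \cap K$ is a base space simplex. First, I would argue that in the lazy reduction $R[\tsx'] \neq 0$: otherwise, by the remark following Lemma~\ref{lem:lazy-reduction}, we would have $V[\tsx',\ctsx] = 0$, contradicting $\tsx' \in V[\ctsx]$. Hence $\ssx' := \low R[\tsx']$ is well-defined, and Corollary~\ref{cor:lazy-nested} applied with $\tsx_i = \tsx'$, $\tsx_j = \ctsx$ (so $\ssx_j = \low R[\ctsx] = \ssx$ and $\ssx_i = \ssx'$) yields the chain of inequalities
\[
    \ssx < \ssx' < \tsx' < \ctsx
\]
in the cone filtration order.

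Next, I would observe that $\ssx$ and $\tsx'$ are both base space simplices and recall that the cone filtration in \cref{eq:cone-filtration} orders base space simplices by increasing $\min$-value before all cone simplices. The relation $\ssx < \tsx'$ in this filtration therefore implies $\min(\ssx) \leq \min(\tsx')$, i.e., $d \leq \min(\tsx')$.

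Finally, I would invoke Remark~\ref{rmk:generic-time-intervals} (or equivalently the middle inequality of \cref{eq:face-nests}), which guarantees that the interval $T(\tsx')$ is not a single point, so $\min(\tsx') < \max(\tsx')$. Combined with the bound just established and the setting assumption $d > b$, this gives
\[
    \max(\tsx') > \min(\tsx') \geq d > b,
\]
hence $\max(\tsx') \geq b$ and $\tsx' \in \supK{b}$, as claimed.

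There is no serious obstacle once the corollary is in hand; the one subtlety worth double-checking is the case $R[\tsx'] = 0$, which I dispose of using the remark that in the lazy reduction such a column forces the corresponding $V$-entry to vanish. The substance of the argument is the two-step promotion: the lazy reduction forces the nested pivot inequality $\ssx < \ssx'$, which is a statement about $\min$-values, and then the non-degeneracy of time intervals promotes the $\min$-bound on $\tsx'$ into a $\max$-bound, so that the pointwise persistence-level nesting translates into containment in the super-levelset $\supK{b}$.
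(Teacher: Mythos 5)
Your proof is correct and takes essentially the same route as the paper: apply Corollary~\ref{cor:lazy-nested} with $\tsx_i = \tsx'$, $\tsx_j = \ctsx$, translate the resulting filtration-order inequality $\ssx < \tsx'$ into $\min(\ssx) \leq \min(\tsx')$, and use the non-degeneracy of $T(\tsx')$ (Remark~\ref{rmk:generic-time-intervals}) to promote this into $\max(\tsx') > b$. The paper's version is terser --- it writes the chain of inequalities without pausing to note that $R[\tsx']\neq 0$ --- but that case is already built into the corollary via the $\bar{\ssx}$ convention, so your explicit check is a harmless (and arguably clarifying) addition rather than a different argument.
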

    \begin{proof}
                Let $\tsx'$ be a simplex in $V[\ctsx] \cap K$. In this case,
                for the lazy reduction, because $V[\tsx', \ctsx] \neq 0$, from
                \cref{cor:lazy-nested}, taking
                $\ssx_j = \ssx$ and $\tsx_i = \tsx'$, we get
                \[
                    \max(\tsx') > \min(\tsx') > \min(\ssx) = d > b.
                    \qedhere
                \]
    \end{proof}

\begin{claim}%
    \label{clm:extended-oo-correct}
    Absolute cycle $\pz$ produced by the call $\algname{Lift-Cycle}(R[\ctsx], (d,b), 0)$
    is an apex representative in $\Hgr(\pK[b,d])$.
\end{claim}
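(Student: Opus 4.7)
The plan is to invoke case~3 of Theorem~\ref{thm:apex-representatives}, which requires that $\pz$ be an absolute cycle in $\Hgr(\pK[b,d])$ containing a cell $\ssx' \times b$ with $\max(\ssx') = b$ and a cell $\tsx' \times d$ with $\min(\tsx') = d$. First I set $z = R[\ctsx]$ and invoke Claim~\ref{clm:extended-oo-structure} together with Claim~\ref{clm:lazy-V-sup} (the latter identifying the chain $c$ from that claim with $V[\ctsx]$ under the lazy reduction) to conclude that $[z] \in \Hgr(K[b,d])$ and $\ssx, \tsx \in z$. Because $z \subseteq \supK{b} \cap \subK{d}$, every simplex $\tsx' \in z$ has $T(\tsx') \cap [b,d] \neq \emptyset$, so the algorithm assigns a valid time $t_{\tsx'}$ to each one and discards nothing.

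For the cycle property I would use that $z = \bdry V[\ctsx]$ already satisfies $\bdry z = 0$, and combined with $z_b^d = z$ from the previous paragraph this gives $\bdry z_b^d = 0$. With $\winit = 0$, Claim~\ref{clm:lifted-boundary} then yields $\wfinal = 0$ and hence $\bdry \pz = 0$. Every cell emitted by \algname{Lift-Cycle} run with direction $(s,f) = (d,b)$ carries a time coordinate in $[b,d]$, so $\pz$ sits inside $\pK[b,d]$ and is an absolute cycle there.

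The two endpoint markers should fall out of the generic-interval assumption (Remark~\ref{rmk:generic-time-intervals}). Since $\min(\ssx) = d$ and $\max(\ssx) > d$, the set $T(\ssx) \cap [b,d]$ collapses to $\{d\}$, forcing $t_\ssx = d$ and therefore $\ssx \times d \in \pz$; symmetrically $\max(\tsx) = b$ together with $\min(\tsx) < b$ forces $t_\tsx = b$ and $\tsx \times b \in \pz$. At that point case~3 of Theorem~\ref{thm:apex-representatives} closes the argument.

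The main source of friction --- more bookkeeping than real obstacle --- is a name swap between the theorem and this subsection: the theorem's $\ssx$ is the endpoint simplex at time $b$ with $\max = b$, whereas this subsection inherits the opposite convention from the cone pair $(\ssx,\ctsx)$, in which $\ssx$ has $\min = d$. Once this pairing is kept straight, the verification is just a simultaneous application of the endpoint arguments used in Claims~\ref{clm:ordinary-correct} and~\ref{clm:relative-correct}, now checking both endpoints of the interval at once.
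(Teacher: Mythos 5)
Your proposal is correct and follows essentially the same route as the paper: obtain $\ssx,\tsx \in z$ from \cref{clm:extended-oo-structure}, observe that their supports pin their chosen times to $d$ and $b$ respectively, so $\ssx \times d$ and $\tsx \times b$ land in $\pz$, and invoke case~3 of \cref{thm:apex-representatives}. The extra steps you supply --- checking $\bdry \pz = 0$ via \cref{clm:lifted-boundary} with $\winit = 0$ and $\bdry z_b^d = \bdry R[\ctsx] = 0$, and flagging the $\ssx \leftrightarrow \tsx$ role swap between the theorem's hypotheses and the subsection's naming --- are accurate elaborations that the paper leaves implicit, not a different argument. (Two small nits: the conclusion $T(\ssx) \cap [b,d] = \{d\}$ holds even without \cref{rmk:generic-time-intervals}, so that citation is unnecessary; and the identification of $c$ with $V[\ctsx]$ is the content of the lazy-reduction remark surrounding \cref{clm:lazy-V-sup} rather than the claim itself.)
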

\begin{proof}
    Let $z = R[\ctsx]$.
    Because $\ssx, \tsx \in z$ (\cref{clm:extended-oo-structure}) and
    $\min(\ssx) = d$ and $\max(\tsx) = b$,
    both simplices $\ssx \times d$ and $\tsx \times b$ are in the lifted cycle $\pz$.
    It follows from \cref{thm:apex-representatives} that $\pz$ is an apex
    representative in $\Hgr(\pK[b,d])$.
\end{proof}

\subsection{Extended (closed-closed)}
\label{sec:extended-cc}

Throughout this subsection, we assume the following \textbf{setting}:
\[
    \text{a pair $(\ssx,\ctsx)$ in the cone filtration, born at $b$, dying at $d$, i.e.,
    $\min(\ssx) = b < d = \max(\tsx)$.}
\]

    \includegraphics[page=4]{figures/pyramid}
        \quad
    \includegraphics[page=4]{figures/representatives}

    \begin{claim}
        \label{clm:extended-cc-structure}
        Let $z = V[\ctsx] \cap \supK{b}$. Then $[z] \in \Hgr(K(b,d))$.
        Furthermore,
        \[
            \ssx \in \winit = -\bdry \left( V[\ctsx] \cap (\cK - K) + V[\ctsx] \cap \supK{b} \right) \subseteq K[b,b]
        \]
        and $\tsx \in \left( \winit + \bdry z \right) = -\bdry \left( V[\ctsx] \cap (\cK - K) \right)$.
    \end{claim}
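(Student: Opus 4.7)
The plan is to mirror the three-way split used in the proof of \cref{clm:extended-oo-structure}. Write $V[\ctsx] = \alpha + \beta + \gamma$ with $\alpha = V[\ctsx] \cap (K - \supK{b})$, $\beta = V[\ctsx] \cap \supK{b} = z$, and $\gamma = V[\ctsx] \cap (\cK - K)$, so that $R[\ctsx] = \bdry\alpha + \bdry\beta + \bdry\gamma$ and $\winit = -\bdry(\beta + \gamma) = \bdry\alpha - R[\ctsx]$.

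First I would collect four boundary containments. Since $\alpha \subseteq K - \supK{b} \subseteq \subK{b-1}$, \cref{eq:face-nests} gives $\bdry\alpha \subseteq \subK{b-1}$ as well. Applying \cref{eq:face-nests} to $\beta \subseteq \supK{b}$, every face of a simplex in $\beta$ has strictly larger $\max$, so $\bdry\beta \subseteq \supK{b+1}$. Because $\ctsx$ is the pivot in column $V[\ctsx]$ and cone cells are filtered by decreasing $\max$ in \cref{eq:cone-filtration}, every $\omega * \rho \in \gamma$ has $\max(\rho) \geq d$, so the base-space portion of $\bdry\gamma$ lies in $\supK{d}$. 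To kill the cone-simplex portion of $\bdry\gamma$, I would use that $\ssx = \low R[\ctsx]$ is a base-space simplex and that base cells precede all cone cells in the cone filtration; hence $R[\ctsx]$ contains no cone simplex, and since $\bdry\alpha$ and $\bdry\beta$ are base-space chains, $\bdry\gamma$ must be too, giving $\bdry\gamma \subseteq \supK{d}$. Finally, $R[\ctsx] \subseteq \subK{b}$ because $\ssx$ with $\min(\ssx) = b$ is its latest simplex in filtration order.

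With these pieces the three conclusions of the claim follow. From $\bdry\beta = R[\ctsx] - \bdry\alpha - \bdry\gamma$, every simplex of $\bdry\beta$ lies in $\subK{b} \cup \supK{d}$, giving $[z] \in \Hgr(K, \subK{b} \cup \supK{d}) = \Hgr(K(b,d))$. Two different expressions for $\winit$ give its two-sided containment: $\winit = \bdry\alpha - R[\ctsx] \subseteq \subK{b}$, and $\winit = -\bdry\beta - \bdry\gamma \subseteq \supK{b+1} \cup \supK{d} \subseteq \supK{b}$, so $\winit \subseteq K[b,b]$. To see $\ssx \in \winit$, note that $\ssx \in R[\ctsx]$ as its pivot while $\ssx \notin \bdry\alpha \subseteq \subK{b-1}$ because $\min(\ssx) = b$. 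Lastly, $\winit + \bdry z = -\bdry(\beta + \gamma) + \bdry\beta = -\bdry\gamma$, and $\tsx$ has nonzero coefficient in $-\bdry\gamma$ because $\ctsx = \omega * \tsx$ is the unique simplex of $\gamma$ whose base-space boundary contains $\tsx$.

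The main obstacle is the vanishing of the cone-simplex portion of $\bdry\gamma$: it is the one place where the global structure of the cone filtration (all base cells before all cone cells) interacts with the fact that the pivot $\ssx$ lies in the base space, and without it one cannot isolate $\beta$ from the cone contribution. Everything else is book-keeping with \cref{eq:face-nests} and the sub-/super-level set definitions, structurally parallel to \cref{clm:extended-oo-structure}.
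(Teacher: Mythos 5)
Your proof is correct and follows the same three-way decomposition $V[\ctsx] = \alpha + \beta + \gamma$ with $\alpha = V[\ctsx] \cap (K - \supK{b})$, $\beta = V[\ctsx] \cap \supK{b}$, $\gamma = V[\ctsx] \cap (\cK - K)$ that the paper uses, with identical containment arguments via \cref{eq:face-nests} and the filtration order. The only stylistic difference is that you re-derive $\bdry\gamma \subseteq K$ (hence $\bdry\gamma \subseteq \supK{d}$) from $\ssx = \low R[\ctsx]$ being a base-space simplex, a step the paper spells out in the proof of \cref{clm:extended-oo-structure} and then uses implicitly here.
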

    \begin{proof}
        We split the chain $V[\ctsx]$ into three parts:
        \begin{align*}
            & \alpha = V[\ctsx] \cap (K - \supK{b}), &
            & z = \beta  = V[\ctsx] \cap \supK{b}, &
            & \gamma = V[\ctsx] \cap (\cK - K),
        \end{align*}
        and re-write its boundary accordingly:
        $
            R[\ctsx] = \bdry V[\ctsx] = \bdry \alpha + \bdry \beta + \bdry \gamma.
        $

        \begin{enumerate}
        \item
            $\bdry z \subseteq \subK{b} \cup \supK{d}$.

            Putting together \cref{eq:sup-sub,eq:bdry-sub},
            $\alpha \subseteq K - \supK{b}$ implies $\bdry \alpha \subseteq \subK{b}$.
            Because $\ssx$ is the latest simplex in $R[\ctsx]$ and $\min(\ssx) = b$,
            $R[\ctsx] \subseteq \subK{b}$.
            It follows that $\bdry (\beta + \gamma) = R[\ctsx] - \bdry \alpha \subseteq \subK{b}$.
            Since $\bdry \gamma \subseteq \supK{d}$, we have
            $\bdry z = \bdry (\beta + \gamma) - \bdry \gamma \subseteq \subK{b} \cup \supK{d}$.
            Therefore, $[z] \in \Hgr(K(b,d))$.

        \item
            $\ssx \in \winit = \bdry(\beta + \gamma) \subseteq K[b,b].$

            We already saw that $\bdry (\beta + \gamma) \subseteq \subK{b}$.
            To show that it is also in $\supK{b}$, we note that
            since $\beta \subseteq \supK{b}$, its boundary $\bdry \beta \subseteq \supK{b}$.
            $\bdry \gamma \subseteq \supK{d} \subseteq \supK{b}$.
            Therefore, $\bdry (\beta + \gamma) \subseteq \supK{b}$.

            Because $\min(\ssx) = b$, \cref{eq:face-nests} implies that $\ssx$
            cannot be in the boundary of any simplex $\tsx \in \alpha$ since
            $\max(\tsx) < b$. Since $\ssx \in R[\ctsx]$ and
            $\winit = \bdry(\beta + \gamma) = R[\ctsx] - \bdry \alpha$, it
            follows that $\ssx \in \winit$.

        \item
            $\tsx \in -\bdry \gamma.$

            Because $\ctsx$ is the only simplex in $\gamma$ that has $\tsx$ in
            its boundary, $\tsx \in -\bdry \gamma$.
            \qedhere
        \end{enumerate}
    \end{proof}

    \begin{claim}[Lazy reduction]
        If $R=DV$ is obtained from the lazy reduction, then
        $z = V[\ctsx] \cap K$ is in $K(b,d)$, $\ssx \in \winit = -R[\ctsx] = -\bdry V[\ctsx] \subseteq K[b,b]$,
        and $\tsx \in \left( \winit - \bdry z \right)$.
    \end{claim}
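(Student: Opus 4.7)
The plan is to reduce the whole statement to one fact: in the lazy reduction, the chain $\alpha = V[\ctsx] \cap (K - \supK{b})$ from \cref{clm:extended-cc-structure} vanishes. Once $\alpha = 0$, the decomposition $V[\ctsx] = \alpha + \beta + \gamma$ collapses to $V[\ctsx] = \beta + \gamma$ with $\beta = V[\ctsx] \cap K$, so the $z = V[\ctsx] \cap K$ of the present claim coincides with $\beta$ of the general claim; $R[\ctsx] = \bdry V[\ctsx] = \bdry\beta + \bdry\gamma = \bdry(\beta + \gamma)$ yields $\winit = -\bdry(\beta + \gamma) = -R[\ctsx]$; and the three conclusions --- that $[z]\in \Hgr(K(b,d))$, that $\ssx \in \winit \subseteq K[b,b]$, and the membership statement for $\tsx$ --- all follow by substituting $\alpha = 0$ into the corresponding conclusions of \cref{clm:extended-cc-structure}.

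The crux, therefore, is proving $V[\ctsx] \cap K \subseteq \supK{b}$, the closed-closed analog of \cref{clm:lazy-V-sup}. I would mirror that short argument, with $b$ now playing the role that $d$ played there (since here $\min(\ssx) = b$ rather than $\min(\ssx) = d$). For any base simplex $\tsx' \in V[\ctsx] \cap K$, we have $V[\tsx', \ctsx] \neq 0$, so \cref{cor:lazy-nested} applies with $\tsx_i = \tsx'$ and $\tsx_j = \ctsx$. Taking $\ssx_j = \low R[\ctsx] = \ssx$ and $\ssx_i = \low R[\tsx']$, the corollary gives the strict ordering $\ssx < \ssx_i < \tsx' < \ctsx$ in the cone filtration. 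Since $\tsx' \in K$, the column $R[\tsx']$ lies entirely in $K$, so $\ssx_i$ is a base simplex, and $\ssx < \ssx_i$ in the cone filtration translates to $\min(\ssx_i) > \min(\ssx) = b$. Applying \cref{eq:face-nests} to $\ssx_i \in \bdry \tsx'$ then yields $\max(\tsx') > \min(\ssx_i) > b$, so $\tsx' \in \supK{b}$ as required.

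I do not anticipate a serious obstacle: the argument is essentially \cref{clm:lazy-V-sup} with $d$ replaced by $b$. The only subtlety is translating the cone-filtration inequality $\ssx < \ssx_i$ into the base-filtration inequality $\min(\ssx_i) > \min(\ssx)$, which relies on $\ssx_i$ being a base simplex --- but this is automatic because $\tsx'$ is a base simplex, hence its boundary column $R[\tsx']$ contains no cone rows. Everything else is bookkeeping inherited directly from \cref{clm:extended-cc-structure}, and I would flag that the $\tsx$-membership statement is just the specialization of $\wfinal = \winit + \bdry z$ of \cref{clm:lifted-boundary} once $\alpha = 0$.
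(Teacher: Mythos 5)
Your reduction to showing $\alpha = V[\ctsx] \cap (K - \supK{b}) = 0$ and then reading the claim off \cref{clm:extended-cc-structure} with $\alpha=0$ is exactly the paper's plan, as is the idea of mirroring \cref{clm:lazy-V-sup} with $b$ playing the role of $d$.

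The one weak step is the invocation of \cref{eq:face-nests} via the assertion that $\ssx_i = \low R[\tsx'] \in \bdry \tsx'$. This is not guaranteed: $R[\tsx'] = D\,V[\tsx']$ is a combination of boundary columns, and the pivot of a reduced column need not be a face of $\tsx'$ itself (closing a triangle with the edge added last can leave a pivot vertex that is not incident to that edge). The inequality you want, $\max(\tsx') > \min(\ssx_i)$, is nonetheless true, but it should come from the filtration order rather than from a face relation: \cref{cor:lazy-nested} already delivers $\ssx_i < \tsx'$ among base simplices, hence $\min(\ssx_i) \leq \min(\tsx')$, and \cref{rmk:generic-time-intervals} gives $\min(\tsx') < \max(\tsx')$. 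In fact the intermediary $\ssx_i$ is superfluous; the paper reads $\min(\ssx) < \min(\tsx')$ directly off the chain $\ssx < \ssx_i < \tsx'$ from \cref{cor:lazy-nested} and then concludes $\max(\tsx') > \min(\tsx') > \min(\ssx) = b$. With that substitution in place of the \cref{eq:face-nests} appeal, your argument is complete and coincides with the paper's.
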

    \begin{proof}
        We only need to show that, in the case of the lazy reduction,
        $\alpha = V[\ctsx] \cap (K - \supK{b}) = 0$,
        i.e., $V[\ctsx] \cap K = V[\ctsx] \cap \supK{b}$.
        Then the claim follows from the previous \cref{clm:extended-cc-structure}.

        The proof is analogous to the proof of \cref{clm:lazy-V-sup}.
        Let $\tsx'$ be a simplex in $V[\ctsx] \cap K$. In this case,
        for the lazy reduction, because $V[\tsx', \ctsx] \neq 0$, from
        \cref{cor:lazy-nested}, taking
        $\ssx_j = \ssx$ and $\tsx_i = \tsx'$, we get
        $
            \max(\tsx') > \min(\tsx') > \min(\ssx) = b.
            \qedhere
        $
    \end{proof}

    \begin{remark}
        $V[\ctsx] \cap K$ can be empty.
        For example, if $K$ is a single vertex that appears at $b$ and then disappears at $d$.
        This is the reason why we need to bootstrap \cref{alg:lift-cycle,alg:lift-cycle-reinterpreted} with
        an initial cycle $\winit$.
    \end{remark}

\begin{claim}%
    \label{clm:extended-cc-correct}
    Relative cycle $\pz$ produced by the call $\algname{Lift-Cycle}(V[\ctsx] \cap K, (b,d), -R[\ctsx])$
    is an apex representative in $\Hgr(\pK(b,d))$.
\end{claim}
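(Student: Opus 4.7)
The plan is to invoke the fourth clause of \cref{thm:apex-representatives}, which says that a relative cycle $\pz$ representing a class in $\Hgr(\pK(b,d))$ is an apex representative whenever $\bdry \pz$ contains both $\ssx \times b$ with $\min(\ssx) = b$ and $\tsx \times d$ with $\max(\tsx) = d$. The structural lemma \cref{clm:extended-cc-structure} (in its lazy-reduction form) already hands us the relevant facts about $z = V[\ctsx] \cap K$ and $\winit = -R[\ctsx]$: namely, $[z] \in \Hgr(K(b,d))$, $\winit \subseteq K[b,b]$ with $\ssx \in \winit$, and $\tsx \in \winit + \bdry z$. My task is to transport these facts across the algorithm by means of the boundary formula from \cref{clm:lifted-boundary}.

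Applying \cref{clm:lifted-boundary} with $s = b$, $f = d$, I would first write
\[
    \bdry \pz = \winit \times b + \wfinal \times d, \qquad \wfinal = \winit + \bdry z_b^d.
\]
Together with $\winit \subseteq K[b,b]$, this already confirms that $\pz$ is a valid relative cycle in $\pK(b,d)$. The first endpoint condition falls out immediately: $\ssx \in \winit$ and $\min(\ssx) = b$ give $\ssx \times b \in \bdry \pz$ with the required minimum equal to $b$.

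The main obstacle is the second endpoint. The structural claim asserts $\tsx \in \winit + \bdry z$, whereas I need $\tsx \in \wfinal = \winit + \bdry z_b^d$; the discrepancy is $\bdry(z - z_b^d)$, where $z - z_b^d$ collects the simplices $\tsx' \in z$ with $T(\tsx') \cap [b,d] = \emptyset$. Here I would use the lazy-reduction consequence $z \subseteq \supK{b}$ (so $\max(\tsx') \geq b$), which reduces the emptiness condition to $\min(\tsx') > d$. On the other hand, any coface $\tsx'$ of $\tsx$ must, by \cref{eq:face-nests}, satisfy $\min(\tsx') < \max(\tsx') < \max(\tsx) = d$, contradicting $\min(\tsx') > d$. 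Thus $\tsx$ has no coface in $z - z_b^d$, so its coefficients in $\bdry z$ and in $\bdry z_b^d$ agree, and consequently $\tsx \in \wfinal$, giving $\tsx \times d \in \bdry \pz$ with $\max(\tsx) = d$. Invoking \cref{thm:apex-representatives}(4) closes the argument.
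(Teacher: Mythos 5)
Your proposal is correct and follows essentially the same route as the paper: invoke the fourth clause of \cref{thm:apex-representatives}, pull the structural facts about $\ssx$, $\tsx$, $\winit$, and $z$ from \cref{clm:extended-cc-structure}, and verify the two endpoint conditions via the boundary formula of \cref{clm:lifted-boundary}. Your treatment of the second endpoint is just a slightly more explicit unpacking of the paper's observation that $z \subseteq \supK{b}$ together with $\cbdry \tsx \subseteq \subK{d}$ forces every coface of $\tsx$ in $z$ to have its chosen time in $[b,d]$, so the coefficients of $\tsx$ in $\bdry z$ and $\bdry z_b^d$ coincide.
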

\begin{proof}~

    \begin{enumerate}
        \item
            $\ssx \times b \in \bdry \pz$, $\min(\ssx) = b$.

            Simplex $\ssx$, with $\min(\ssx) = b$, is in the initial cycle $\winit$ (\cref{clm:extended-cc-structure}).
            Therefore, $\ssx \times b$ is in the boundry $\bdry \pz$ of the lifted cycle.

        \item
            $\tsx \times d \in \bdry \pz$, $\max(\tsx) = d$.

            Simplex $\tsx$, with $\max(\tsx) = d$, is in $\winit + \bdry z$ (\cref{clm:extended-cc-structure}).
            Its coboundary $\cbdry \tsx \subseteq \subK{d}$.
            Meanwhile, $z \subseteq \supK{b}$ (\cref{clm:extended-cc-structure}).
            Therefore, all cofaces of $\tsx$ in $z$ intersect $[b,d]$.
            Therefore, $\tsx \in \winit + \bdry \left( z \cap [b,d] \right) = \wfinal$.
            It follows from \cref{clm:lifted-boundary} that
            \[
                \tsx \times d \in \wfinal \times d \subseteq \bdry \pz.
            \]
    \end{enumerate}
    It follows from \cref{thm:apex-representatives} that $\pz$ is an apex
    representative in $\Hgr(\pK(b,d))$.
\end{proof}

\section{Zigzag Representatives}
\label{sec:zigzag-representatives}

To solve our original problem --- to find zigzag representatives --- all that
remains is to map an apex representative into an appropriate space in the
zigzag.
Because we assumed in \cref{rmk:generic-time-intervals} that no
simplex is supported on a single space, the death and the birth for any given
interval are distinct.
Suppose we want to map an apex representative $\pz$, with $[\pz] \in \Hgr(\pK I)$,
into a zigzag representative in
$\Hgr(\pK[i,i]) \simeq \Hgr(K_i)$.

If $i$ is odd, then the $p$-cycle $\pz$ may contain a vertical cell $\tsx \times
i$, which does not map into $(p-1)$-cells in the levelset. To sidestep this
complication, we perturb $i$.
Let $x = i
\pm \ee \in I$ be a real value near $i$ that lies inside our birth-death
interval. We assume $x = i + \ee$; the other case is symmetric.
We implicitly subdivide $\pK$ at $x$ and extend our previous notation to allow
for the interlevel sets ending at $x$, e.g., $\pK_i^x$.
We note that $\pK_i^x$ deformation retracts onto $\pK_i^i$, with the homotopy
following the second (time) coordinate, and $\pK_x^x$ includes into $\pK_i^i$
once the cell times are shifted from $x$ to $i$.
We want to compose three maps:
\[
    \Hgr_p(\pK I) \to \Hgr_p(\pK(i,x)) \stackrel{\bdry^*}{\to} \Hgr_{p-1}(\pK[x,x]) \to \Hgr_{p-1}(K_i).
\]
Because $\pK I$ is an apex of a diamond, and $\pK(i,x)$ is a space in the diamond, the first map is an inclusion (of pairs), so $\pz$ remains a relative cycle in $\pK(i,x)$.
Let
\[
    \pz = \sum \alpha_\tsx \cdot (\tsx \times t_\tsx) + \sum \alpha_\ssx \cdot (\ssx \times [t_\ssx^1,t_\ssx^2]),
\]
then the second (boundary) map takes $\pz$ to
\[
    \pz[x,x] = \sum_{\substack{(\ssx \times [t_\ssx^1,t_\ssx^2]) \in \pz \\ x \in [t_\ssx^1,t_\ssx^2]}} \alpha_\ssx \cdot (\ssx \times x),
\]
which maps into $K_i$ by dropping the second coordinate,
\[
    z(i) = \sum_{(\ssx \times x) \in \pz[x,x]} \alpha_\ssx \cdot \ssx.
\]
By storing the apex cycle $\pz$ in an interval tree~\cite{Ede80,McC80}, we can
retrieve any zigzag representative of size $C$ in time $O(\log m + C)$.

\bibliographystyle{unsrturl}
\bibliography{refs}

\begin{thebibliography}{10}

\bibitem{DHM24}
Tamal~K. Dey, Tao Hou, and Dmitriy Morozov.
\newblock A fast algorithm for computing zigzag representatives.
\newblock In Yossi Azar and Debmalya Panigrahi, editors, {\em Proceedings of
  the 2025 Annual {ACM-SIAM} Symposium on Discrete Algorithms, {SODA} 2025, New
  Orleans, LA, USA, January 12-15, 2025}, pages 3530--3546. {SIAM}, 2025.
\newblock \href {https://doi.org/10.1137/1.9781611978322.116}
  {\path{doi:10.1137/1.9781611978322.116}}.

\bibitem{CdSM09}
Gunnar Carlsson, Vin de~Silva, and Dmitriy Morozov.
\newblock Zigzag persistent homology and real-valued functions.
\newblock In {\em Proceedings of the Twenty-fifth Annual Symposium on
  Computational Geometry}, SCG '09, pages 247--256, New York, NY, USA, 2009.
  ACM.
\newblock \href {https://doi.org/10.1145/1542362.1542408}
  {\path{doi:10.1145/1542362.1542408}}.

\bibitem{BEMP13}
Paul Bendich, Herbert Edelsbrunner, Dmitriy Morozov, and Amit Patel.
\newblock Homology and robustness of level and interlevel sets.
\newblock {\em Homology, Homotopy and Applications}, 15(1):51--72, 2013.

\bibitem{MS24}
Dmitriy Morozov and Primoz Skraba.
\newblock Persistent (co)homology in matrix multiplication time.
\newblock {\em arXiv [math.AT]}, December 2024.
\newblock \href {https://arxiv.org/abs/2412.02591} {\path{arXiv:2412.02591}}.

\bibitem{Hat02}
Allen Hatcher.
\newblock {\em Algebraic Topology}.
\newblock Cambridge University Press, 2002.

\bibitem{EdHa10}
Herbert Edelsbrunner and John Harer.
\newblock {\em Computational Topology: An Introduction}.
\newblock American Mathematical Soc., 2010.

\bibitem{EdMo17}
Herbert Edelsbrunner and Dmitriy Morozov.
\newblock Persistent homology.
\newblock In Jacob~E Goodman, Joseph O'Rourke, and Csaba~D Tóth, editors, {\em
  Handbook of Discrete and Computational Geometry}. CRC Press, 2017.
\newblock URL: \url{https://www.csun.edu/~ctoth/Handbook/chap24.pdf}.

\bibitem{ELZ02}
Herber Edelsbrunner, David Letscher, and Afra Zomorodian.
\newblock Topological persistence and simplification.
\newblock {\em Discrete \& computational geometry}, 28(4):511--533, November
  2002.
\newblock \href {https://doi.org/10.1007/s00454-002-2885-2}
  {\path{doi:10.1007/s00454-002-2885-2}}.

\bibitem{NM24}
Arnur Nigmetov and Dmitriy Morozov.
\newblock Topological optimization with big steps.
\newblock {\em Discrete \& computational geometry}, 72(1):310--344, July 2024.
\newblock \href {https://doi.org/10.1007/s00454-023-00613-x}
  {\path{doi:10.1007/s00454-023-00613-x}}.

\bibitem{DH22}
Tamal~K. Dey and Tao Hou.
\newblock {Fast Computation of Zigzag Persistence}.
\newblock In Shiri Chechik, Gonzalo Navarro, Eva Rotenberg, and Grzegorz
  Herman, editors, {\em 30th Annual European Symposium on Algorithms (ESA
  2022)}, volume 244 of {\em Leibniz International Proceedings in Informatics
  (LIPIcs)}, pages 43:1--43:15, Dagstuhl, Germany, 2022. Schloss Dagstuhl --
  Leibniz-Zentrum f{\"u}r Informatik.
\newblock \href {https://doi.org/10.4230/LIPIcs.ESA.2022.43}
  {\path{doi:10.4230/LIPIcs.ESA.2022.43}}.

\bibitem{CdS10}
Gunnar Carlsson and Vin de~Silva.
\newblock Zigzag persistence.
\newblock {\em Foundations of computational mathematics}, 10(4):367--405,
  August 2010.
\newblock \href {https://doi.org/10.1007/s10208-010-9066-0}
  {\path{doi:10.1007/s10208-010-9066-0}}.

\bibitem{CdSKM19}
Gunnar Carlsson, Vin de~Silva, Sara Kališnik, and Dmitriy Morozov.
\newblock Parametrized homology via zigzag persistence.
\newblock {\em Algebraic \& Geometric Topology}, 19(2):657--700, March 2019.
\newblock \href {https://doi.org/10.2140/agt.2019.19.657}
  {\path{doi:10.2140/agt.2019.19.657}}.

\bibitem{BBF21}
Ulrich Bauer, Magnus~Bakke Botnan, and Benedikt Fluhr.
\newblock Structure and interleavings of relative interlevel set cohomology.
\newblock {\em arXiv [math.AT]}, August 2021.
\newblock \href {https://arxiv.org/abs/2108.09298} {\path{arXiv:2108.09298}}.

\bibitem{CEH09}
David Cohen-Steiner, Herbert Edelsbrunner, and John Harer.
\newblock Extending persistence using {Poincaré} and {Lefschetz} duality.
\newblock {\em Foundations of computational mathematics}, 9(1):79--103,
  February 2009.
\newblock \href {https://doi.org/10.1007/s10208-008-9027-z}
  {\path{doi:10.1007/s10208-008-9027-z}}.

\bibitem{Ede80}
Herbert Edelsbrunner.
\newblock Dynamic data structures for orthogonal intersection queries.
\newblock Technical Report F59, Inst. Information Process., Graz University of
  Technology, Austria, 1980.
\newblock URL:
  \url{https://pub.ista.ac.at/~edels/Papers/1980-01-OrthogonalIntersectionQueries.pdf}.

\bibitem{McC80}
Edward~M. McCreight.
\newblock Efficient algorithms for enumerating intersecting intervals and
  rectangles.
\newblock Technical Report CSL-80-9, Xeox Palo Alto Research Center, 1980.

\end{thebibliography}

\appendix

\section{Lifting Algorithm}
\label{apx:lifting}

\cref{alg:lift-cycle} takes a $p$-cycle $z$, a direction expressed as a pair of
values $(s,f)$, and an initial $(p-1)$-cycle $\winit$.
We call the pair $(s,f)$ a direction because the order of the values specifies
whether we process the cycle in increasing ($s < f$) or decreasing ($f < s$)
order. The algorithm stretches the cycle from $\pK[s,s]$ to $\pK[f,f]$,
covering $\pK I$.

The algorithm maintains a $(p-1)$-cycle $w$ that gets stretched into the $p$-chain
$w \times [l,t_\tsx]$ at each step.
This cycle starts out as
$\winit \subseteq K[s,s]$, which is often, but not always 0.
The algorithm chooses for each simplex $\tsx \in z$ a time $t_\tsx$ that falls
inside the interval $I$ and processes the simplices according to these times
from $s$ to $f$. It adds each simplex as a vertical cell $\tsx \times t_\tsx$ to
the prism and updates $w$ with the boundary $\bdry \tsx$.
See \cref{fig:lifted-cycle} for an illustration of a cycle in $K$ and its lift
in $\pK$.

\begin{algorithm}
    \caption{\algname{Lift-Cycle-easy}($z$, $(s,f)$, $\winit$)}
    \begin{algorithmic}[1]
        \Statex \textbf{Input:}
        \begin{itemize}[label=--, leftmargin=*]
            \item relative $p$-cycle $z$ in $H(KI)$, where $I$ is one of $(b,d]$, $[b,d)$, $(b,d)$, $[b,d]$;
            \item direction from start to finish, $s, f$ ($=$ $b,d$ or $d,b$);
            \item initial boundary $\winit = \bdry z \cap K[s,s] \subseteq K[s,s]$
        \end{itemize}

        \Statex \textbf{Output:}
        \begin{itemize}[label=--, leftmargin=*]
            \item relative $p$-cycle $\pz$ in $H(\pK I)$
        \end{itemize}

        \For{\textit{each} $\tsx \in z$}
            \State $t_\tsx \gets \textrm{a value in}~ T(\tsx) \cap [b,d]$ \;
        \EndFor

        \State $\pz \gets 0$ \Comment{$p$-chain} \;
        \State $w \gets \winit$ \Comment{$(p-1)$-cycle} \;
        \State $l \gets s$ \Comment{``last'' time} \;
        \For{$(t_\tsx, \tsx)$ in order of $t_\tsx$ from $s$ to $f$}
            \State $a_\tsx = \coeff{\tsx}{z}$
            \State $\pz \gets \pz + a_\tsx \cdot (\tsx \times t_\tsx)$
                \label{line:z-update-1} \;
            \If{$l \neq t_\tsx$}
                \State $\pz \gets \pz + (w \times [l,t_\tsx])$
                \label{line:z-update-2} \;
            \EndIf
            \State $w \gets w + a_\tsx \cdot \bdry \tsx$ \label{line:w-update} \;
            \State $l \gets t_\tsx$ \;
        \EndFor
        \State $\wfinal \gets w$ \;
        \If{$l \neq f$}
            \State $\pz \gets \pz + (\wfinal \times [l, f])$
        \EndIf
    \end{algorithmic}
    \label{alg:lift-cycle}
\end{algorithm}

As stated, \cref{alg:lift-cycle} runs in quadratic time (two linear cycles $w
\times [l, t_\tsx]$ and $\pz$ get added at every step). It is possible to
improve this with a more careful implementation, by noticing that a simplex
$\ssx$ is output with the same coefficient at every step until this coefficient
changes when we encounter one of its cofaces.
Instead of introducing extra dictionaries for this bookkeeping,
\cref{alg:lift-cycle-reinterpreted} gives an alternative formulation of the same
algorithm. If one thinks of the update of the cycle $w$ in Line
\ref{line:w-update} of \cref{alg:lift-cycle}
as a for-loop over simplices $\ssx$ in cycles $w$ and $\bdry \tsx$, then
\cref{alg:lift-cycle-reinterpreted} simply reverses the outer loop over
simplices $\tsx$ and the inner loop over simplices $\ssx$. For each
$(p-1)$-simplex $\ssx$ the algorithm considers when its coefficient in cycle $w$ changes.
This happens exactly when we encounter the time $t_\tsx$ of one of its cofaces
$\tsx$ in cycle $z$. Accordingly, all such times $t_\tsx$ partition the interval
$[s,f]$ into intervals. The coefficient of $\ssx$ in $w$ differs between
adjacent intervals by the coefficient of $\tsx$ in $z$; see \cref{fig:cycle-coefficients}.
\cref{alg:lift-cycle-reinterpreted} goes through simplices $\ssx$ and times
$t_\tsx$ and adds cells $\ssx \times [l, t_\tsx]$ to the lifted cycle $\pz$ with
the same coefficients as \cref{alg:lift-cycle}.

\begin{figure}
    \centering
    \includegraphics{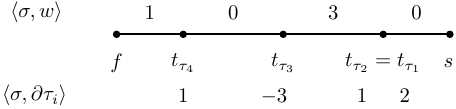}
    \caption{Coefficients of some $(p-1)$-dimensional simplex $\ssx$ in cycle
             $w$ in \cref{alg:lift-cycle} and in the
             boundary of its cofaces $\tsx_1, \tsx_2, \tsx_3, \tsx_4$. The coefficients
             in the cycle change exactly when \cref{alg:lift-cycle} encounters the
             chosen times $t_{\tsx_i}$ of the cofaces.
             \cref{alg:lift-cycle-reinterpreted} takes the $\ssx$-centric view
             and instead of keeping track of the full cycle $w$ computes the
             coefficients of $\ssx$ on the different intervals.
             We note that in this example $\ssx \notin \winit$,
             but $\ssx \in \wfinal$. In particular, we are lifting a relative cycle.}
    \label{fig:cycle-coefficients}
\end{figure}

\paragraph{Correctness.}
The correctness of \cref{alg:lift-cycle} is immediate and follows from the
following claim about the boundary structure,
We let $z_x^y$ be the restriction of cycle $z$ to the simplices $\tsx$ whose times
$t_\tsx$ lie in the interval $[x,y]$, i.e.,
$z_x^y = \sum_{\tsx \in z, t_\tsx \in [x,y]} \coeff{\tsx}{z} \cdot \tsx$.

\begin{claim}
    \label{clm:lifted-boundary}
    Given input cycle $z \in K$, the boundary of the lifted cycle $\pz$ satisfies
    $
        \bdry \pz = \winit \times s + \wfinal \times f,
    $
    where
    $
        \wfinal = \winit + \bdry z_b^d.
    $
\end{claim}
\begin{proof}
    The claim is immediate since $\wfinal$ is derived from $\winit$ by adding
    the boundaries of the simplices in the cycle whose time falls in the interval.
\end{proof}
It follows that if $\winit \subseteq K[s,s]$, then $\pz$ is a cycle in $\pK I$.
What is not immediate is why the chain $w \times [l,t_\tsx]$ in
Line \ref{line:z-update-2} of \cref{alg:lift-cycle} is in the prism $\pK$.

\begin{claim}
    Chain $w \times [l, t_\tsx]$ added in Line \ref{line:z-update-2} of
    \cref{alg:lift-cycle} is present in $\pK$.
\end{claim}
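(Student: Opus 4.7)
The plan is to adapt the three-case argument already used to prove the analogous claim for \cref{alg:lift-cycle-reinterpreted}. Without loss of generality assume $s < f$; the other direction is symmetric. Fix the moment in \cref{alg:lift-cycle} when Line \ref{line:z-update-2} adds the chain $w \times [l, t_\tsx]$ to $\pz$; the final post-loop addition of $\wfinal \times [l, f]$ is handled analogously with $t_\tsx$ replaced by $f$. It suffices to show that every simplex $\ssx$ with $\coeff{\ssx}{w} \neq 0$ at that moment satisfies $[l, t_\tsx] \subseteq T(\ssx)$, i.e., both $\min(\ssx) \leq l$ and $t_\tsx \leq \max(\ssx)$.

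For the lower bound $\min(\ssx) \leq l$, I would trace how $\ssx$ first entered $w$. Either $\ssx \in \winit \subseteq K[s,s]$, in which case $\min(\ssx) \leq s \leq l$; or $\ssx$ was introduced via a Line \ref{line:w-update} update $w \gets w + a_{\tsx'} \cdot \bdry \tsx'$ at some earlier iteration, so $\tsx' \in z$ with $t_{\tsx'} \leq l$, and then \cref{eq:face-nests} gives $\min(\ssx) < \min(\tsx') \leq t_{\tsx'} \leq l$.

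The main obstacle is the upper bound $t_\tsx \leq \max(\ssx)$, where an easy sub-case and a delicate sub-case arise. If $\ssx$ has any coface $\tsx'' \in z$ with $t_{\tsx''} \geq t_\tsx$, then \cref{eq:face-nests} immediately yields $\max(\ssx) > \max(\tsx'') \geq t_{\tsx''} \geq t_\tsx$. Otherwise every coface of $\ssx$ in $z$ has time strictly less than $t_\tsx$; setting $x = \max(\ssx)$ and supposing for contradiction that $x < t_\tsx$, \cref{eq:face-nests} moreover prevents any coface of $\ssx$ from having time $\geq x$. Consequently no future Line \ref{line:w-update} update can alter $\coeff{\ssx}{w}$, and hence $\ssx \in \wfinal$. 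Since all contributions to $\ssx$ originate from cofaces with times in $[s,x] \subseteq [s, f)$, one concludes $\ssx \in \bdry z$ with $\max(\ssx) = x$ strictly inside the interval $I$, contradicting the hypothesis that $z$ is a relative cycle in $KI$. This reproduces Case 3 of the reinterpreted algorithm's proof almost verbatim; alternatively, since \cref{alg:lift-cycle-reinterpreted} is merely a reorganization of \cref{alg:lift-cycle} producing the same lifted cycle $\pz$, the whole claim can instead be reduced to the version already established in the main text.
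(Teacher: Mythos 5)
Your proposal follows essentially the same route as the paper's own proof: both hinge on \cref{eq:face-nests} to rule out cofaces of $\ssx$ at times $\geq x = \max(\ssx)$, and both then drive the contradiction from the hypothesis that $z$ is a relative cycle in $KI$. Your explicit split into a lower bound ($\min(\ssx) \leq l$, traced by how $\ssx$ entered $w$) and an upper bound ($t_\tsx \leq \max(\ssx)$) is a clean presentation that the paper compresses into ``either $\min(\ssx)$ or $\max(\ssx)$ lies in $[l,t_\tsx]$.''

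There is one gap, however, at the end of your delicate sub-case. You conclude ``$\ssx \in \bdry z$ with $\max(\ssx) = x$ strictly inside the interval $I$'' solely from the fact that all coface contributions have times in $[s,x]$. This ignores the possibility that $\ssx$'s nonzero coefficient in $w$ stems (partly or wholly) from $\winit$. If $\ssx \in \winit \subseteq K[s,s]$, then $\min(\ssx) \leq s = b$, so $\ssx$ \emph{is} in the allowed relative-boundary region $\subK{b}$ (relevant for $I = (b,d)$), and $\ssx \in \bdry z$ is not by itself a contradiction. The paper handles this as a separate case (its case (2)): if $\ssx$ has no cofaces in $z$, it can only have entered $w$ through $\winit$, yet $\winit = \bdry z \cap K[s,s]$ forces $\ssx \in \bdry z$ and hence the existence of a coface --- contradiction. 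Your argument needs to either make this case explicit, or observe that for $\ssx \in K[s,s]$ the definition of $\winit$ forces $\coeff{\ssx}{\wfinal} = \coeff{\ssx}{\winit} + \coeff{\ssx}{\bdry z} = 0$ once all cofaces are processed, so $\ssx$ cannot actually be in $w$ at the problematic iteration. Your fallback suggestion --- reducing to the already-established claim for \cref{alg:lift-cycle-reinterpreted} on the grounds that the two algorithms produce the same $\pz$ --- is a legitimate shortcut, provided one accepts the paper's informal argument for that equivalence.
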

\begin{proof}
    After we process all the times $t_\tsx$ up to $l$, $w =
    \winit + \bdry z_s^l$.
    If $w \times [l, t_\tsx]$ is not in $\pK$, it means that there is some
    simplex $\ssx \in w$ such that $\ssx \times [l, t_\tsx] \notin \pK$.
    Assuming $\ssx$ is the first such simplex we encounter, either $\min (\ssx)$
    or $\max (\ssx)$ --- call it $x$ --- is in $[l,t_\tsx]$ (depending on whether $s > f$ or $s < f$).
    \cref{eq:face-nests} implies that $\ssx$ has no cofaces in
    the range $[x,f]$. Therefore, $\ssx \in \winit + \bdry z_s^x$ and
    $\ssx \notin \bdry z_x^f$. Therefore, either (1) $\ssx \in \bdry z$,
    or (2) $\ssx \in w$ and there is no $\tsx \in z$ with $\ssx \in \bdry \tsx$
    (this situation would happen right away at $l=s$).
    Since we assume $z$ is a relative cycle in $KI$, it cannot have any boundary
    inside the interval $I$ itself, so (1) is impossible.
    Since we assume $\bdry z \cap K[s,s] = \winit$, (2) is impossible.
\end{proof}

As stated, \cref{alg:lift-cycle} runs in quadratic time.

\end{document}